\theoremstyle{plain}
\newtheorem{theorem}{Theorem}[section]
\newtheorem{cor}[theorem]{Corollary}
\newtheorem{prop}[theorem]{Proposition}
\newtheorem{thm}[theorem]{Theorem}
\newtheorem{lem}[theorem]{Lemma}
\newtheorem*{thm*}{Theorem}
\theoremstyle{definition}
\newcommand{\R}{\mathbb{R}}
\newcommand{\supp}{\operatorname{supp}}
\newcommand{\M}{\mathcal{M}}
\renewcommand{\int}{\operatorname{int}}
\author{Katharine Turner}
\title{Cone fields and topological sampling in manifolds with bounded curvature}
\begin{document}
\maketitle
\begin{abstract}
A standard reconstruction problem is how to discover a compact set from a noisy point cloud that approximates it. A finite point cloud is a compact set. This paper proves a reconstruction theorem which gives a sufficient condition, as a bound on the Hausdorff distance between two compact sets, for when certain offsets of these two sets are homotopic in terms of the absence of $\mu$-critical points in an annular region. We reduce the problem of reconstructing a subset from a point cloud to the existence of a deformation retraction from the offset of the subset to the subset itself. The ambient space can be any Riemannian manifold but we focus on ambient manifolds which have nowhere negative curvature (this includes Euclidean space). We get an improvement on previous bounds for the case where the ambient space is Euclidean whenever $\mu\leq0.945$ ($\mu\in (0,1)$ by definition). In the process, we prove stability theorems for $\mu$-critical points when the ambient space is a manifold. 
%
\end{abstract}


\section{Introduction}
In modern science and engineering a common problem is understanding some shape from a point cloud sampled from that shape. This point cloud should be thought of as some finite number of (potentially) noisy samples. Topology and geometry are considered very natural tools in such data analysis (see e.g. \cite{ghrist2008barcodes} and \cite{carlsson2009topology}). One reason is because topological invariants are often more stable under noise.  We will want to understand the homotopy type of a set - two objects are homotopy equivalent if there is a way to continuously deform one object into another.

Often we wish to know the extent to which we can, and how to, reconstruct shapes from noisy point clouds. Naturally the more restrictions on the space the easier it is to reconstruct. The first area of focus was the study of surfaces in $\R^3$, motivated by problems such as medical imaging, visualization and reverse engineering of physical objects. Algorithms with theoretical guaranties exist for smooth closed surfaces with sufficient dense samples. In \cite{amenta2000} the concept of a local feature size   was introduced. The  local feature size at $p$, denoted $\text{lfs}(p)$, is the distance from $p$ to the medial axis of $A$.  The sampling conditions for surface reconstruction were based on the concept of $\epsilon$-sampling. A point cloud of $A$ is an $\epsilon$-sample if for every point $p \in A$ there is some sample point at distance at most $\epsilon \text{lfs}(p)$ away. The Cocone Algorithm produces a homeomorphic set from any $0.06$-sampling of a smooth closed surface \cite{amenta2000}. This process has been extended to smooth surfaces with boundaries \cite{dey2009isotopic}. However given an arbitrary compact set $K$, the best we can hope for is to find some nearby set that  is homotopy equivalent to an offset of $K$. For instance from a point cloud we can not tell apart the original set and the same set  with a slight thickening in places.

One of the simplest methods of reconstruction is to use the offset of a sampling. Given a set $K$, the $r$-offset of $K$, denoted $K_r$, is defined to be $\{x\in \M:d(x, K)\leq r\}$. This is topologically the same as taking the $\alpha$-shape of data points \cite{alpha1992} or taking the C\v ech complex \cite{chazal2008}. This leads to the problem of finding theoretical guarantees as to when an offset of a sampling has the same topology (i.e. homotopy type) as the underlying set. In other words we want to find conditions on a point cloud $S$ of a compact set $A$ so that $S_r$ is homotopy equivalent to $A$. We will in fact find sufficient conditions for $S_r$ to deformation retract to $A$. Clearly this will only work if the point cloud is sufficiently close. Usually ``sufficiently close'' is interpreted as a bound on the Hausdorff distance between $A$ and $S$  (the Hausdorff distance between $A$ and $S$, denoted $d_H(A,S)$, is the smallest $r\geq 0$ such that $S\subseteq A_r$ and $A \subseteq S_r$).  Much of the earlier theory assumes that this Hausdorff distance is less than some measure of geometrical or topological feature size of the shape and show  the output is correct. We now survey some of this development.

The medial axis of a compact set $A$ is the set of points $p$ in the ambient space for which there is more than one point in $A$ which is closest to $p$. In Figure $1$ (a) and (b) the medial axes are the dashed lines. The reach of $A$ is the minimum distance between points in $A$ and points in the medial axis of $A$. It can be thought of as the minimum local feature size. The reach in (a) is $0$ and the reach in (b) is $c$. Sampling conditions based on reach include those found in \cite{NSW1} which consider smooth manifolds in $\R^n$.  Smooth submanifolds have positive reach but a wedge, for instance, does not. 

To deal with a larger class of sets Chazal, Cohen-Steiner and Lieutier in \cite{compactcrit} introduced the notion of $\mu$-reach. A point is $\mu$-critical when the norm of the gradient of the distance function at that point is less than or equal to $\mu$. In section $3$ we elaborate on a geometric description. In brief, a point $p$ is a $\cos \theta$\footnote{$\theta \in [0,\pi/2]$}-critical point of the distance function to $A$ if all the points in $A$ that are closest to $p$ cannot be contained in any cone emanating from $p$ with an angle less than $\theta$. In particular $0$-critical points are critical points of the distance function and every point on the medial axis is a $\mu$-critical point for some $\mu<1$. The $\mu$-reach of a set is the supremum of $r>0$ such that the $r$ offset does not contain any $\mu$-critical points. In  \cite{compactcrit} and \cite{ripsvscech} sampling conditions are given in terms of the $\mu$-reach. 

Another important feature size is the weak feature size. The weak feature size is the infimum of the positive critical values of the distance function from $A$. This has several advantages. Firstly, it can mean a significant improvement on the bounds such as in the case of ``hairy'' objects. Secondly, it can be applied to a larger class of compact sets. Every semi-algebraic set has positive weak feature size \cite{fu1985tubular}. This follows from the fact that the distance function from a semi-algebraic set has only finitely many critical values. Instead of making our bounds in terms of $\mu$-reach, we will only require the absence of $\mu$-critical points in an annular region of $A$ along with a bound on the weak feature size. 

\begin{figure}
\begin{center}
{\scalebox{1}{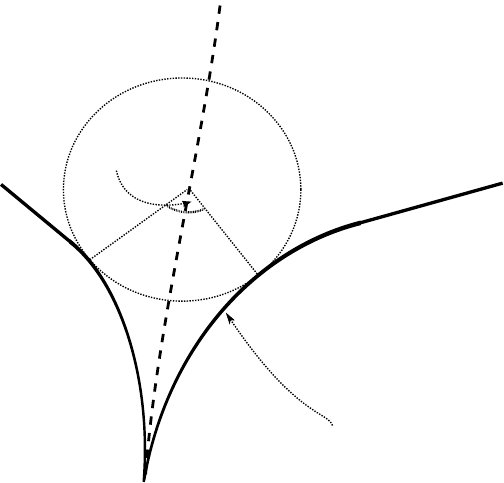}} 
\caption{The medial axes is the dashed line.  The weak feature size is $\infty$. The closest points in $K$ to $p$ are marked by $q$ and $q'$. Since $\angle(qpq')=\alpha$ we $p$ is $\cos(\alpha)$-critical. There are no $\cos \alpha$-critical points whose distance to the cusp greater than $l$.  $K$ has a $\mu$-reach of $0$ for all $\mu>0$.}
\end{center}
\end{figure}
\begin{figure}
\begin{center}
{\scalebox{1}{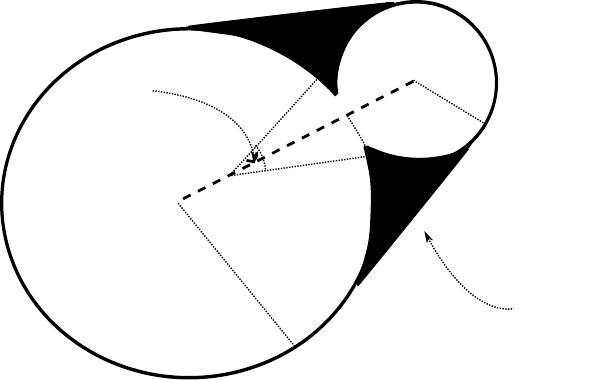}}
    \caption{ The medial axes is the dashed line. The weak feature size is $c$. The radii of the two circles, $a$ and $b$, are also critical values of the distance function. The $\mu$-reach is $c$ for all $\mu\in [0,1)$.  $x$ is a $\cos \beta$-critical point but  there are no $\cos\beta$ critical points in the annular region $\{x: l<d(x,A)<b\}$. }
 \end{center}
\end{figure}


Various feature sizes are illustrated in Figures $1$ and $2$. In Figure $1$, $p$ has two points in $K$ closest to $p$ marked by $q$ and $q'$. Since the angle between the geodesics from $p$ to $q$ and $p$ to $q'$ is $\alpha$ we conclude that $p$ is $\cos \theta$ critical for all $\theta<\alpha$ (or equivalently $\cos \theta >\cos \alpha$). There are no $\cos \alpha$-critical points whose distance to the cusp greater than $l$. Along the medial axis traveling toward the cusp we have $\mu$-critical points with $\mu$ tending to zero. This example show how a set can have a $\mu$-reach of $0$ for all $\mu>0$ and yet have a positive weak feature size (which in this example is infinity).

Now consider Figure $2$. The weak feature size is $c$. The radii of the two circles, $a$ and $b$, are also critical values of the distance function. The $\mu$-reach is $c$ for all $\mu\in [0,1)$. Now $x$ is a $\cos \beta$-critical point. Since $l>a$ we can say that there are no $\cos\beta$ critical points in the annular region $\{x: l<d(x,A)<b\}$. 
%
%
%

One limitation to any of these reconstruction theorems is the requirement of knowing geometric properties of the unknown object we are trying to reconstruct. A shift in perspective can overcome this limitation by considering the geometric properties of the point cloud, which we do know, and can hence prove sufficient conditions for an offset of the original set to deformation retract to an offset of the point cloud. We know the point cloud and hence we know the $\mu$-critical values of its distance function.
Theoretical guarantees are given in \cite{compactcrit} for when suitable homotopies exist by considering the $\mu$-reach of an offset of the point cloud. Unfortunately there is usually a significant number of small critical values of the distance function to the point cloud. This means the starting offset beyond which $\mu$-reach is considered is significant.  Our approach, which only considers the existence of $\mu$-critical points in a annular region, thus gains a significant advantage.

We note that previous reconstruction theories have been restricted to the case where the ambient space is Euclidean. Another contribution of this paper is to allow the ambient space to be any manifold whose curvature is bounded from below, thus answering an open question asked in \cite{compactcrit}. Although we focus on the important case of non-negatively curved manifolds we explore a paradigm of reconstruction which can be applied to manifolds with curvature bounded from below by some $\kappa<0$ with analogous, albeit messier, results. Examples of manifolds with nowhere negative curvature are Stiefel and Grassmannian manifolds. These examples are important because there are many applications where data naturally lies on these manifolds such as in dynamic textures \cite{dynamic}, face recognition \cite{facerecognition}, gait recognition \cite{gait} and affine shape analysis and image analysis \cite{affineshape}. 

Even when restricted to the case where we use $\mu$-reach in Euclidean space we still improve on the previous results whenever $\mu \leq 0.945$. The main theorem of this paper is as follows.

\begin{thm*}
Let $\mu\in (0,1)$, $r>0$. Let $\M$ be a smooth manifold with nowhere negative curvature such that every point has an injectivity radius greater than $r$.
Let $L$ a compact subset with $d_H(K,L) <\delta$. Suppose that there are no $\mu$-critical points in $K_{[r-\delta, r -\delta + 2\delta/\mu]}$ and $(4+\mu^2)\delta < \mu^2 r$. Then $L_r$ deformation retracts to $K_{r-\delta}$.
\end{thm*}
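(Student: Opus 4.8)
The plan is to produce the deformation retraction explicitly: build a cone field on the ``shell'' $S:=L_r\setminus\int K_{r-\delta}$ whose integral flow pushes $L_r$ inward onto $K_{r-\delta}$, and use the critical‑point theory for distance functions on manifolds developed above to turn this cone field into an honest deformation retraction fixing $K_{r-\delta}$. First the bookkeeping. Since $d_H(K,L)<\delta$ we have $|d_K-d_L|<\delta$ pointwise, so $K_{r-\delta}\subseteq\{d_L<r\}\subseteq\int L_r$ and $d_K<r+\delta$ throughout $L_r$; hence on $S$ one has $d_K\in[r-\delta,r+\delta)$. Because $\mu<1$ gives $r+\delta<r-\delta+2\delta/\mu$, the $d_K$‑values on $S$ all lie in $[r-\delta,\,r-\delta+2\delta/\mu]$, so by hypothesis no point of $S$ is $\mu$‑critical for $d_K$, i.e.\ $\|\nabla d_K\|>\mu$ on $S$ (gradient in the generalized sense). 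Also $(4+\mu^2)\delta<\mu^2 r$ forces $\delta<r$, so every offset in sight is nondegenerate and $S$ is compact.

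At each $x\in S$ let $\mathcal C(x)\subseteq T_x\M$ be the convex cone of descent directions of $d_K$ at $x$ that descend at rate at least $c$ (that is, $\langle v,g\rangle\le-c\,\|v\|$ for every $g$ in the generalized gradient of $d_K$ at $x$) and which, wherever $d_L(x)$ is close to $r$, additionally do not increase $d_L$ (that is, $\langle v,h\rangle\le0$ for every $h$ in the generalized gradient of $d_L$ at $x$); here $c=c(\mu)>0$ is a uniform constant to be fixed. The heart of the proof is the claim that, under the hypotheses, $\mathcal C(x)\neq\emptyset$ for all $x\in S$ for a suitable $c$. Granting this, $\mathcal C$ is lower semicontinuous with convex values, so a partition‑of‑unity argument produces a locally Lipschitz vector field $X$ on $S$ with $X(x)\in\overline{\mathcal C(x)}$; extend $X$ by $0$ on $K_{r-\delta}$. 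Its flow preserves $L_r$ and fixes $K_{r-\delta}$ pointwise, and since $d_K$ decreases at rate $\ge c$ while $d_K\in[r-\delta,r+\delta)$, every trajectory issuing from $S$ crosses $\{d_K=r-\delta\}\subseteq K_{r-\delta}$ transversally within time $2\delta/c$. Reparametrising by the resulting (continuous) hitting time so that each point lands on $K_{r-\delta}$ at time $1$ gives the required deformation retraction of $L_r$ onto $K_{r-\delta}$.

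To prove the claim, fix $x\in S$. Away from $\partial L_r$ there is no $d_L$‑constraint and $\mathcal C(x)\neq\emptyset$ is simply $\|\nabla d_K(x)\|>\mu$ — the direction opposite the generalized gradient descends $d_K$ at rate $\|\nabla d_K(x)\|>\mu$ — with the uniform $c$ coming from lower semicontinuity of $\|\nabla d_K\|$ on the compact set $S$. So suppose $d_L(x)=r$, and pick nearest points $a\in K$, $\ell\in L$ of $x$, so $d(x,a)=d_K(x)\in[r-\delta,r+\delta)$ and $d(x,\ell)=r$. One must find a direction lying both in the $d_K$‑descent cone and in the polar cone of the generalized gradient of $d_L$ at $x$. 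This is where the geometry enters: with only $d_H(K,L)<\delta$ available one relates the nearest‑point structure of $L$ at $x$ to that of $K$ at $x$ and at the points the flow sweeps through, which is exactly the content of the $\mu$‑critical stability results established earlier, and this transfer costs a factor $1/\mu$ — precisely why the annular region is taken of width $2\delta/\mu$ rather than $2\delta$. Non‑negative curvature together with the hypothesis $\operatorname{inj}>r$ enters through Toponogov's hinge comparison, which bounds every angle that occurs by its value in a Euclidean comparison triangle, so that descending $d_K$ and not increasing $d_L$ are at least as easy as in $\R^n$ (where $\operatorname{inj}>r$ also makes minimizing geodesics to nearest points unique at scale $r$ and $\exp_x$ a diffeomorphism there, so the generalized gradients are honestly defined). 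The resulting Euclidean estimate is a law‑of‑cosines computation, and it closes — the two cones genuinely meet, with a uniform margin fixing $c=c(\mu)$ — exactly when $(4+\mu^2)\delta<\mu^2 r$.

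The bookkeeping, the selection‑theoretic construction of $X$, and the reparametrisation are routine once the flow machinery above is available. The real obstacle is the claim, and within it the sharp and uniform control of how the nearest‑point sets of $K$ and of $L$ at a common shell point can differ when all one knows is $d_H(K,L)<\delta$: one must show that absence of $\mu$‑critical points of $d_K$ in the annulus forces, uniformly over $S$, a common descent/non‑ascent direction, with Toponogov comparison applied in the correct direction and carried out precisely enough to yield the constant $(4+\mu^2)/\mu^2$ rather than something weaker. Making this trigonometry tight — not merely qualitative — is the crux, and is presumably the source of the claimed improvement over the Euclidean $\mu$‑reach bounds of \cite{compactcrit} when $\mu\le0.945$.
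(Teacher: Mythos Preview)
Your architecture matches the paper's --- build a vector field on the shell whose flow stays in $L_r$ while strictly decreasing $d_K$, then stop on hitting $K_{r-\delta}$ --- but the step you explicitly label ``the crux'' and leave as a hand-wave is precisely where the paper's argument diverges from what you sketch, and without it there is no proof.

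You work with two separate constraints at a boundary point: descend $d_K$ and do not increase $d_L$. The paper instead eliminates $L$ entirely by passing to the proxy $K_\delta\supseteq L$: it asks for a single acute cone $C(w_x,\beta_x)$ whose exponential image of length $r$ contains \emph{all} of $K_\delta\cap\overline{B(x,r)}$ (the ``$r$-spanning cone for $K_\delta$''). Any direction in the complementary cone $C(w_x,\pi/2-\beta_x)$ then makes an acute angle with every direction toward $K_\delta\cap\overline{B(x,r)}$, hence simultaneously with every direction to a $K$-nearest point and every direction to an $L$-nearest point; both of your constraints fall out at once, and the problem is reduced to a question about $K$ alone. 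The existence of such an acute spanning cone (Lemma~\ref{lem:local}) is what produces the constant $(4+\mu^2)/\mu^2$, and its proof is \emph{not} a direct law-of-cosines estimate: one supposes no acute spanning cone exists, so that $0$ lies in the convex hull of $\exp_x^{-1}(K_\delta)\cap\overline{B(0,r)}$, radially projects that set onto the sphere of radius $\tfrac12(r+d_K(x)-\delta)$ to manufacture an auxiliary compact set for which $x$ is a genuine $0$-critical point, and only then invokes the stability result (Proposition~\ref{prop:stable}) to locate a $\mu$-critical point of $d_K$ in the forbidden annulus. Your appeal to ``$\mu$-critical stability plus Toponogov'' does not supply this auxiliary construction, and a direct comparison of the nearest-point sets $\Gamma_K(x)$ and $\Gamma_L(x)$ cannot work: the $L$-nearest points lie at distance $r$ and can be near points of $K$ that are \emph{not} $K$-nearest, so controlling their angular spread already forces you back to bounding all of $K_\delta\cap\overline{B(x,r)}$. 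A secondary gap: you assert $\mathcal{C}$ is lower semicontinuous, but it is an intersection of two cone-valued maps, and intersections of lower semicontinuous set-valued maps need not be lower semicontinuous. The paper avoids this by working with the single minimal $r$-spanning cone field, proving it \emph{upper} semicontinuous (which occupies most of the proof of Theorem~\ref{thm:big}), and taking the complementary cone field.
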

 
By considering both the set $A$ and its sampling point cloud $S$ in the role of $K$ we can reexpress this theorem as a sampling condition. It is a sampling condition as it gives a bound on the Hausdorff distance between the compact set we wish to reconstruct and its sampling point cloud.  

\begin{thm*}
Let $\mu\in (0,1)$, $r>0$. Let $A$ be a compact subset of a smooth manifold $\M$ with nowhere negative curvature such that the injectivity radius of every point in $\M$ is greater than $r$ and $A_r$ is homotopic to $A$. Let $S$ be a (potentially noisy) finite point cloud of $A$ (i.e. a finite set of points). 
Suppose that either
\begin{enumerate}
\item[(i)] there are no $\mu$-critical points of the distance function from $A$ in 
$\{x \in \M : d(x,A)\in [a, b]\}$

or
\item [(ii)] there are no $\mu$-critical points of the distance function from $S$ in $\{x\in \M : d(x,S) \in [a,b]\}$.
\end{enumerate}
Then $S_r$ is homotopic to $A$ whenever
$d_H(S,A)\leq \min\left\{r-a, \frac{b\mu -r\mu}{4-\mu}, \frac{\mu^2r}{4+\mu^2} \right\}.$
\end{thm*}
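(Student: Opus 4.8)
The second theorem is a corollary of the first, obtained by applying the main theorem twice and chaining homotopy equivalences. Let me write out how I would deduce it.

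Set $\delta = d_H(S,A)$ and assume $\delta \le \min\{r-a,\ \frac{b\mu-r\mu}{4-\mu},\ \frac{\mu^2 r}{4+\mu^2}\}$. The plan is to produce a chain of deformation retractions (hence homotopy equivalences)
\[
A \simeq A_r \supseteq S_{r-\delta} \qquad\text{and}\qquad S_r \searrow S_{?},
\]
but the cleaner route is the symmetric one: apply the main theorem once with $K=A$, $L=S$ to get that $S_r$ deformation retracts to $A_{r-\delta}$, and then separately use the hypothesis $A_r \simeq A$ together with a deformation retraction $A_r \searrow A_{r-\delta}$ to finish. So the key steps are:

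\emph{Step 1: verify the hypotheses of the main theorem with $K=A$, $L=S$.} We need (a) $d_H(A,S) < \delta'$ for some $\delta'$ slightly larger than $\delta$ — since the main theorem uses a strict inequality $d_H(K,L)<\delta$ while here we have $d_H(S,A)\le\delta$, I would either invoke the theorem with $\delta' = \delta + \epsilon$ and let $\epsilon \to 0$, or note that the critical-point hypothesis on a closed annulus lets us absorb the boundary case; (b) the injectivity radius condition, which is assumed; (c) $(4+\mu^2)\delta \le \mu^2 r$, which is exactly the third term in the minimum; and (d) the absence of $\mu$-critical points of the distance to $A$ in the annulus $A_{[r-\delta,\ r-\delta+2\delta/\mu]}$. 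For (d) I must check this annulus sits inside $\{x : d(x,A)\in[a,b]\}$, i.e. that $a \le r-\delta$ and $r-\delta+2\delta/\mu \le b$. The first is $\delta \le r-a$, the first term of the minimum. The second rearranges to $2\delta/\mu \le b-r+\delta$, i.e. $\delta(2/\mu - 1) \le b-r$, i.e. $\delta(2-\mu) \le \mu(b-r)$, i.e. $\delta \le \frac{\mu(b-r)}{2-\mu}$; since $\frac{b\mu-r\mu}{4-\mu} \le \frac{b\mu - r\mu}{2-\mu}$, the second term of the minimum suffices (in fact with room to spare). In case (ii) we run the identical argument with $S$ in place of $A$ to land $\mu$-critical-point-free annuli around $S$; because the main theorem is stated for an abstract $K$, I will need to also apply it with $K=S$, $L=A$ — here the roles and the offsets swap, so I should be careful about which direction the retraction goes.

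\emph{Step 2: assemble the equivalences.} From Step 1 (case (i)), the main theorem gives that $S_r$ deformation retracts to $A_{r-\delta}$. Since $r-\delta \in (0,r]$ and there are no critical values of the distance to $A$ in $(r-\delta, r]$ — this follows because a $0$-critical point is in particular $\mu$-critical, and the annulus hypothesis kills all $\mu$-critical points in $[a,b] \supseteq [r-\delta, r]$ — the standard isotopy lemma for distance functions (the flow along the gradient, available in the bounded-geometry setting developed in the earlier sections) gives a deformation retraction $A_r \searrow A_{r-\delta}$. Combined with the standing hypothesis $A_r \simeq A$, we get $S_r \simeq A_{r-\delta} \simeq A_r \simeq A$. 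For case (ii), the chain instead routes through offsets of $S$: the absence of $\mu$-critical points near $S$ gives $A_r \searrow S_{r-\delta}$ via the main theorem with $K=S$, $L=A$, and $S_r \searrow S_{r-\delta}$ via the isotopy lemma for the distance to $S$; then $S_r \simeq S_{r-\delta} \simeq A_r \simeq A$.

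\textbf{Main obstacle.} The real content is all in the first theorem; deducing the second is bookkeeping. The one genuinely delicate point is the interface between the \emph{strict} hypothesis $d_H(K,L)<\delta$ in the main theorem and the \emph{non-strict} $d_H(S,A)\le\delta$ here, together with making sure every annulus-containment inequality goes the right way after the rearrangements. I would handle the strict/non-strict issue by a limiting argument: for $\delta' \in (\delta, \delta + \epsilon)$ small, all the inequalities are still satisfied (they are satisfied with slack at $\delta'=\delta$ for the first and second terms, and the third, $(4+\mu^2)\delta' \le \mu^2 r$, holds for $\delta'$ slightly above $\delta$ unless $\delta$ is exactly extremal — in which case a separate continuity/compactness argument on the closed annulus, as in the earlier stability sections, closes the gap), so $S_r \searrow A_{r-\delta'}$, and letting $\delta' \downarrow \delta$ (using that $\bigcap_{\delta'>\delta} A_{r-\delta'}$-type nesting behaves well, or simply that $A_{r-\delta} \searrow A_{r-\delta'}$ again by the isotopy lemma) yields the claim. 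The rest is substituting the three terms of the minimum into conditions (a)–(d) and checking the arithmetic, which is routine.
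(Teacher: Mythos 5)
Your proposal is correct and follows essentially the same route as the paper: the paper's Section 5 corollaries likewise apply the main theorem with $K=A,\ L=S$ (for hypothesis (i)) and $K=S,\ L=A$ (for hypothesis (ii)), check that the terms $r-a$ and $\frac{b\mu-r\mu}{4-\mu}$ force the relevant annulus into $[a,b]$ and that the third term gives $(4+\mu^2)\delta\le\mu^2 r$, and then chain the resulting deformation retractions with the critical-point-free flow between offsets and the hypothesis $A_r\simeq A$. Your extra care about the strict-versus-nonstrict inequality is a point the paper glosses over, but it does not change the argument.
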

 
Note that if  $\text{wfs}(A) \geq a$ then there exists a deformation retraction from $A_a$ to $A_r$ for all $0<r<a$\cite{Grove}.

%
The new ingredient in our approach is the study of cone fields which are generalizations of not necessarily continuous unit vector fields where we attach  a closed ball in the unit tangent sphere, a ``cone'', to each point in the manifold. More precisely, at each point $x$ we chose a unit tangent vector $w_x$ and an angle $\beta_x$ and then we take the cone at $x$ to be the set of unit tangent vectors whose angle with $w_x$ is at most $\beta_x$. We denote this cone at $x$ by $C(w_x, \beta_x)$ and call it acute if $\beta_x$ is acute. A cone field is a choice of cone at each point. In section two we study cone fields, defining upper and lower semicontinuous cone fields and show that acute lower semicontinuous cone fields admit smooth vector fields. 

Of particular interest for our reconstruction theorem is the minimal acute $r$-spanning cone fields. The minimal acute $r$-spanning cone for $K$ from the point $x$, if it exists, is the cone  $C(w_x, \beta_x)$ where 
$$\exp_x\{tv: t\in [0,r],v\in C(w_x,\beta_x\}\supseteq K_\delta \cap B(x,r)$$ and $\beta_x$ is acute and minimal. We can then define its complementary cone to be $C(w_x, \pi/2-\beta_x)$. We will care whether the minimal acute $r$-spanning cone field (defined pointwise) for $K_\delta$ exists over the annular region $K_{[r-\delta,r+\delta]}:=\{x\in \M: d_K(x)\in [r-\delta, r+\delta]\}$. In section two we show if the complementary cone field to the $r$-spanning cone field admits a smooth vector field then the flow of this vector field produces a deformation retract from $L_r$ to $K_{r-\delta}$ when $d_H(K,L)\leq \delta$. Since the complementary cone of an upper semicontinuous cone field is lower semicontinuous, the problem of reconstruction is thus reduced to finding sufficient conditions for an acute $r$-spanning cone field of $K_\delta$ to exist over $K_{[r-\delta,r+\delta]}$ and that this minimal $r$-spanning cone field is upper semicontinuous.

We find sufficient conditions for the existence of an acute $r$-spanning cone field via the stability of $\mu$-critical points. A stability result of $\mu$-critical points when the ambient space is Euclidean is proved in \cite{compactcrit}. We prove a generalization of this result for when the curvature of the ambient space is bounded from below. The key to the proof is Toponogov's theorem about triangle comparison. It is worth observing that although $\mu$-reach is not stable under Hausdorff distance\footnote{In particular, for any compact set $K$ and any bound on Hausdorff distance $\delta>0$ there is a compact set $L$ with zero $\mu$-reach such that $d_H(K,L)<\delta$} we do have some stability of the absence of $\mu$-critical points within of annular regions.

The author thanks her advisor Shmuel Weinberger for helpful conversations and the anonymous referees for their very helpful constructive criticism.

\section{Cone fields}
One way to build a deformation retraction from $Y$ to $A$ is to construct a smooth vector field on $Y-A$ such that the vectors always point towards $A$ and never out of $Y$. More generally there may be some local condition such that if vectors in some smooth vector field satisfy it then the flow of the vector field has some desirable property. This leads us to the definition of cone fields which give a ball of acceptable unit vectors at each point. We will first rigorously define cone fields and then explore a sufficient condition for them to admit a smooth vector field. 

To define cone fields we must recall some differential geometry. A useful reference is \cite{leecurve}. Throughout $(\M, g)$ is a smooth $n$-dimensional manifold without boundary. The unit tangent bundle of a manifold $(\M, g)$, denoted by $UT\M$, is the unit sphere bundle for the tangent bundle $T\M$. It is a fiber bundle over $\M$ whose fiber at each point is the unit sphere in the tangent plane;
$$ UT\M := \coprod_{x \in M} \left\{ v \in T_{x} \M : g_x(v,v) = 1 \right\},$$
where $T_x\M$ denotes the tangent space to $\M$ at $x$. Elements of $UT\M$ are pairs $(x, v)$, where $x$ is some point of the manifold and $v$ is some tangent direction (of unit length) to the manifold at $x$. 

The exponential map at $x$ is a map from the tangent space $T_x \M$ to $\M$. For any $v \in T_x\M$, a tangent vector to $\M$ at $x$, there is a unique geodesic $\gamma_v$ satisfying $\gamma_v(0) = x$ with initial tangent vector $\gamma'_v(0) = v$. This uses the fact that geodesics travel at a constant speed. The exponential map at $x$ is defined by $\exp_x(v) = \gamma_v(1)$. The injectivity radius at a point $x$ is the radius of the largest ball on which the exponential map at $x$ is a diffeomorphism. Normal coordinates at a point $x$ are a local coordinate system in a neighborhood of $x$ obtained by applying the exponential map to the tangent space at $x$. 

Consider the $(n-1)$-dimensional unit sphere, $S^{n-1}$, lying inside $\R^n$ with a metric induced from this embedding. Denote by $C(w,\beta)$ the closed ball in $S^{n-1}$ centered at $w$ with radius $\beta$. We can view $C(w,\beta)$ as the intersection of $S^{n-1}$ with a particular infinite cone:
$$C(w, \beta) = S^{n-1} \cap \{ v\in \R^n\backslash \{0\} : \angle (v,w) \leq \beta\}.$$ 
We say  $C(w, \beta)$ is \emph{acute} if $\beta$ is acute.

We can equip the tangent bundle and the unit tangent bundle over a manifold with a Riemannian metric induced by the Levi-Civita connection.  Given a path $\gamma$ we can consider the linear isomorphism $\Gamma(\gamma)^t_s:T_{\gamma(s)}\M \to T_{\gamma(t)}\M$  induced by parallel transport along $\gamma$. This map is an isometry and so it sends the unit sphere to the unit sphere. We can then define a metric on $UT\M$ as follows. If $\gamma$ is a geodesic, $|s-t|$ small, $v\in T_{\gamma(s)}\M$, and $w\in T_{\gamma(t)}\M$ we define $$d_{UT\M}(v,w)^2 = (t-s)^2 + d_{UT_{\gamma(t)}\M}(w,\Gamma(\gamma)^t_s(v))^2.$$ 
Here we equip $UT_{\gamma(t)}\M$ with the usual metric on $S^{n-1}$.
 
In the case where $\M=\R^n$ is Euclidean space then $ \Gamma(\gamma)^t_s$ is just the identity map (we can of identifying tangent spaces by translation of the base point) and the metric on $UT\M$ is the same as that on the product space $\R^n\times S^{n-1}$.
 
Let $d_K$ denote the distance function from $K$. The Hausdorff distance between two compact sets $K$ and $L$ is denoted $d_H(K,L)$ and is defined by
$$ d_H(K,L) := \max \{\sup_{x\in K} d_L(x), \sup_{y\in L} d_K(y)\}.$$
Alternatively it is the smallest $r\geq0$ such that $K \subseteq L_r$ and $L \subseteq K_r$.

Denote by $F$ the fibre bundle over $\M$ where each fibre over $x \in \M$ is the space of non-empty closed balls in the unit tangent sphere at $x$.  $F$ has a natural metric induced from the Hausdorff metric on compact subsets of $UT\M$. A \emph{cone field} over a subset $U \subseteq \M$ is a section of $F$ restricted to $U$. A cone field is continuous if it is continuous as a section. As a set, we can write a cone field over $U$ as $\{ (x, C(w_x, \beta_x)): x\in U\}$ where $w_x \in UT_x\M$ and $\beta_x \in [0,\pi].$

%
%
%
%
%

One important observation is that if we take the parallel transport of a cone we again have a cone. More precisely if $\Gamma(\gamma)^t_s$ is the linear isomorphism induced by parallel transport along $\gamma$ then
$$\Gamma(\gamma)^t_s(C(w_{\gamma(s)}, \beta_{\gamma(s)}))=C(\Gamma(\gamma)^t_s(w_{\gamma(s)}),\beta_{\gamma(s)}).$$
We can consider vectors inside the cone at a point $x$. We say a vector field $X:=\{(x,v_x): x \in U, v_x \in T_x \M \}$ is \emph{subordinate} to the cone field $W = \{(x, C(w_x,\beta_x))\}$ if $v_x$  always lies in $C(w_x,\beta_x)$. We will call a vector field \emph{strictly subordinate} if the vector at each point lies in the interior of the cone.

Define the \emph{complementary cone} of $C(w,\beta)$ to be $C(w, \pi/2-\beta)$. Given a cone field where the cone at each point is acute we can construct the \emph{complementary cone field} pointwise. From the triangle inequality on the unit tangent sphere we obtain the following useful lemma. 

\begin{lem}\label{lem:comp} 
Let $v$ be a vector in some acute cone $C$ and $v'$ a vector strictly inside the complementary cone to $C$. Then $\angle (v,v') <\pi/2$.
\end{lem}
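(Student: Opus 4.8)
The plan is to treat the unit tangent sphere $UT_x\M$ at the relevant basepoint $x$ as an isometric copy of the round sphere $S^{n-1}$, so that the angle $\angle(\cdot,\cdot)$ between two unit tangent vectors is literally their geodesic distance in this metric space. The only nontrivial input is then the triangle inequality for that metric, which is classical.

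Concretely, write $C = C(w,\beta)$ with $\beta \in [0,\pi/2)$ (this is exactly what ``acute'' means), and let $C(w,\pi/2-\beta)$ be its complementary cone. Since $v \in C$ we have $\angle(v,w)\le \beta$ by definition of $C(w,\beta)$ as the closed ball of radius $\beta$ about $w$ in $S^{n-1}$. Since $v'$ lies \emph{strictly} inside the complementary cone, it lies in the open ball of radius $\pi/2-\beta$ about $w$, so $\angle(v',w) < \pi/2-\beta$. Now apply the triangle inequality on $S^{n-1}$ to the three points $v$, $w$, $v'$:
$$\angle(v,v') \;\le\; \angle(v,w) + \angle(w,v') \;<\; \beta + \left(\tfrac{\pi}{2}-\beta\right) \;=\; \tfrac{\pi}{2}.$$
This is the desired strict inequality.

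I expect no real obstacle here; the statement is essentially a restatement of the triangle inequality once one records that $\angle$ is a genuine metric on the unit tangent sphere (taking values in $[0,\pi]$) and that ``acute cone'' forces $\beta<\pi/2$ so that the complementary radius $\pi/2-\beta$ is nonnegative. The one point to be careful about is that the strictness must come from the ``strictly inside'' hypothesis on $v'$: if $v'$ were merely on the boundary of the complementary cone we would only get $\angle(v,v')\le \pi/2$, and indeed equality can occur (e.g.\ when $v$ is on the boundary of $C$ and $v'$ is antipodally placed relative to $w$ within the plane spanned by $w$ and $v$).
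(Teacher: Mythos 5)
Your proof is correct and is exactly the argument the paper intends: the paper states only that the lemma follows ``from the triangle inequality on the unit tangent sphere,'' and your write-up is precisely that triangle inequality applied to $v$, $w$, $v'$, with the strictness correctly traced to the hypothesis that $v'$ lies strictly inside the complementary cone. Nothing further is needed.
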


We will show the existence of smooth vector fields which are subordinate to particular cone fields. We will define a class of cone fields for which the existence of subordinate Lipschitz vector fields is guaranteed. Let $W= \{(x, C(w_x, \beta_x))\}$ be a cone field over $U$ with $\beta_x>0$ for all $x\in U$. Completely analogous to real-valued functions we say $W$ is \emph{upper semicontinuous} if for every $x\in U$ and every $\epsilon>0$ there is a $\delta>0$ such that for every unit speed geodesic $\gamma$ with $\gamma(0)=x$ we have
$$\Gamma(\gamma)^t_0(C(w_x, \beta_x+\epsilon)) \supseteq C(w_{\gamma(t)}, \beta_{\gamma(t)})$$ for all $t\in (0,\delta)$.
We say $W$ is \emph{lower semicontinuous} if for every $x\in U$ and every $\epsilon \in (0,\beta_x)$ there is a $\delta>0$ such that for every unit speed geodesic $\gamma$ with $\gamma(0)=x$ we have
$$\Gamma(\gamma)^t_0(C(w_x, \beta_x-\epsilon)) \subseteq C(w_{\gamma(t)}, \beta_{\gamma(t)})$$ for all $t\in (0,\delta)$.

Analogous the real-valued function case, one can prove that a cone field is continuous if and only if it is both upper and lower semicontinuous. There is a useful relationship between upper and lower semicontinuous cone fields which is analogous to the fact that the negative of a upper continuous function is lower semicontinuous and vice versa.

\begin{lem}\label{lem:upperlower}
An acute cone field is upper semicontinuous if and only if its complementary cone field is lower semicontinuous.
\end{lem}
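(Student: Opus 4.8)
The plan is to mimic the real-valued case, where $f$ upper semicontinuous $\iff -f$ lower semicontinuous, but here the role of negation is played by the complementary-cone operation $C(w,\beta)\mapsto C(w,\pi/2-\beta)$. First I would unwind what the two semicontinuity conditions say in terms of the angle functions. Write the acute cone field as $W=\{(x,C(w_x,\beta_x))\}$ with $\beta_x\in(0,\pi/2)$, and its complementary cone field as $W^c=\{(x,C(w_x,\pi/2-\beta_x))\}$; note the center vectors $w_x$ are unchanged, so only the radii are affected. The key observation to isolate is a purely spherical fact: for the \emph{same} center vector $w$, containment of cones is governed by their radii, i.e.\ $C(w,\alpha)\subseteq C(w,\alpha')\iff \alpha\le\alpha'$, and this reverses under $\alpha\mapsto\pi/2-\alpha$. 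Since parallel transport $\Gamma(\gamma)^t_0$ is an isometry of unit tangent spheres it commutes with the complementary-cone construction — explicitly, $\Gamma(\gamma)^t_0(C(w,\pi/2-\beta))=C(\Gamma(\gamma)^t_0(w),\pi/2-\beta)$, by the parallel-transport-of-cones identity highlighted in the excerpt.

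Next I would set up the biconditional. Assume $W$ is upper semicontinuous; fix $x$ and $\epsilon\in(0,\pi/2-\beta_x)$ (the admissible range for lower semicontinuity of $W^c$, whose radius at $x$ is $\pi/2-\beta_x$). Apply upper semicontinuity of $W$ with the same $\epsilon$ to get $\delta>0$ with $\Gamma(\gamma)^t_0(C(w_x,\beta_x+\epsilon))\supseteq C(w_{\gamma(t)},\beta_{\gamma(t)})$ for all geodesics $\gamma$ from $x$ and all $t\in(0,\delta)$. Now I want to deduce $\Gamma(\gamma)^t_0(C(w_x,(\pi/2-\beta_x)-\epsilon))\subseteq C(w_{\gamma(t)},\pi/2-\beta_{\gamma(t)})$. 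The containment of cones with possibly different centers is equivalent to a statement about the angle between the centers and the two radii: $C(u,\rho)\supseteq C(u',\rho')$ (both sitting in the same unit sphere after transport) forces, via the triangle inequality on $S^{n-1}$, the relation $\angle(u,u')\le\rho-\rho'$ when $\rho\ge\rho'$ — and conversely that angle bound implies the containment. So translate upper semicontinuity of $W$ into $\angle(\Gamma(\gamma)^t_0 w_x,\, w_{\gamma(t)})\le(\beta_x+\epsilon)-\beta_{\gamma(t)}$ and, simultaneously, the reverse-direction fact that for the containment to possibly fail one needs $\angle$ large; push this through the identity $(\pi/2-\beta_x-\epsilon)-(\pi/2-\beta_{\gamma(t)})=\beta_{\gamma(t)}-\beta_x-\epsilon=-\big[(\beta_x+\epsilon)-\beta_{\gamma(t)}\big]$ to conclude the desired inclusion for $W^c$. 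The converse direction is symmetric because the complementary operation is an involution: $(W^c)^c=W$.

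The main obstacle I anticipate is bookkeeping the centers, not the radii. In the real-valued analogy there is no "center," so the only subtlety here is that upper semicontinuity of $W$ is phrased as a \emph{containment of balls with the larger ball centered at the transported $w_x$ and the smaller at $w_{\gamma(t)}$}, and I must make sure that the same angle bound $\angle(\Gamma(\gamma)^t_0 w_x, w_{\gamma(t)})$ controls both the "$W$ contains" statement and the "$W^c$ is contained in" statement. The clean way to handle this is to prove once and for all the elementary spherical lemma: for $u,u'\in S^{n-1}$ and $\rho,\rho'\in[0,\pi]$, one has $C(u,\rho)\supseteq C(u',\rho')$ iff $\rho\ge\rho'$ and $\angle(u,u')\le\rho-\rho'$ (the forward implication from the triangle inequality, as in Lemma~\ref{lem:comp}; the backward implication immediate). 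Granting this lemma, both implications of Lemma~\ref{lem:upperlower} become a one-line algebraic manipulation on the radii, with the center-angle condition carried along unchanged; I would also remark that the edge cases $\beta_x$ near $0$ or near $\pi/2$ are exactly the ranges excluded by the hypotheses "$\beta_x>0$" (for $W$) and "$\epsilon\in(0,\pi/2-\beta_x)$" (for $W^c$), so no degenerate cone ever appears.
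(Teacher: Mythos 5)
Your proposal is correct and follows essentially the same route as the paper: both reduce each containment of cones to the single angle bound $\angle(\Gamma(\gamma)^t_0 w_x, w_{\gamma(t)})\leq (\beta_x+\epsilon)-\beta_{\gamma(t)}$ via the spherical criterion $C(u,\rho)\supseteq C(u',\rho')\iff \angle(u,u')\leq\rho-\rho'$, and observe that the two radius differences coincide. (Your displayed identity subtracts the radii in the opposite order, giving the negative of the relevant bound, but this is a cosmetic slip; the argument is the intended one.)
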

\begin{proof}
The proof follows from the observation that $$\Gamma(\gamma)^t_0(C(w_x, \beta_x+\epsilon)) \supseteq C(w_{\gamma(t)}, \beta_{\gamma(t)})$$ if and only if $\angle(\Gamma(\gamma)^t_0(w_x), w_{\gamma(t)})\leq  \beta_x+\epsilon - \beta_{\gamma(t)}=(\pi/2-\beta_{\gamma(t)})-((\pi/2-\beta_x)-\epsilon)$ if and only if
$$\Gamma(\gamma)^t_0(C(w_x, (\pi/2-\beta_x)-\epsilon)) \subseteq C(w_{\gamma(t)}, \pi/2-\beta_{\gamma(t)}).$$
\end{proof}

\begin{prop}\label{prop:lowersemicont}
Let  $W=\{(x, C(w_x, \beta_x))\}$ be a lower semicontinuous cone field over $U\subset \M$ with $\beta_x>0$ for all $x\in U$. Then there exists a smooth unit vector field strictly subordinate to $W$.
 \end{prop}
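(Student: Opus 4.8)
The plan is to build the vector field locally using semicontinuity, then patch the local pieces together with a smooth partition of unity, checking that convexity of cones in the unit tangent sphere is preserved under averaging. First I would fix a point $x\in U$. Since $\beta_x>0$, pick $\epsilon = \beta_x/2$; lower semicontinuity gives a $\delta>0$ such that for every unit speed geodesic $\gamma$ with $\gamma(0)=x$ and every $t\in(0,\delta)$ we have $\Gamma(\gamma)^t_0(C(w_x,\beta_x/2))\subseteq C(w_{\gamma(t)},\beta_{\gamma(t)})$. Working in normal coordinates on the geodesic ball $B(x,\delta')$ for some $\delta'<\min\{\delta,\mathrm{inj}(x)\}$, I would define a local vector field $X^{(x)}$ by parallel transporting $w_x$ along the radial geodesics: $X^{(x)}_y := \Gamma(\gamma_{xy})^1_0(w_x)$ where $\gamma_{xy}$ is the (unique, minimizing) geodesic from $x$ to $y$. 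This is smooth in $y$ since the exponential map and parallel transport depend smoothly on the endpoint, it has unit length, and by construction $X^{(x)}_y \in C(w_y,\beta_y)$ for all $y$ in the ball — it is subordinate to $W$ on $B(x,\delta')$. Moreover, shrinking the ball slightly, $X^{(x)}_y$ stays strictly inside $C(w_y,\beta_y)$: the transported vector lands in the smaller cone $C(w_y, \beta_y/2 \text{-ish})$, with positive angular gap to the boundary.

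Next I would take a locally finite open cover $\{B_i = B(x_i,\delta'_i)\}$ of $U$ by such balls, with subordinate local vector fields $X^{(i)}$, each strictly subordinate to $W$ on $B_i$, and a smooth partition of unity $\{\rho_i\}$ subordinate to this cover. The candidate global field is obtained by a \emph{geodesic-convex average}: for $y\in U$, let $I(y)=\{i : \rho_i(y)>0\}$, all of whose balls contain $y$; all the vectors $X^{(i)}_y$ for $i\in I(y)$ lie in the common cone $C(w_y,\beta_y)$, which (being a ball of radius $\beta_y<\pi/2$ in $S^{n-1}$, hence contained in an open hemisphere) is geodesically convex. I would define $X_y$ to be the weighted Riemannian center of mass of $\{X^{(i)}_y\}$ with weights $\{\rho_i(y)\}$ inside that hemisphere (equivalently, normalize the Euclidean convex combination $\sum_i \rho_i(y)\,X^{(i)}_y$ viewed in the ambient $\R^n$ of the unit tangent sphere — this never vanishes since all summands lie in an open half-space, and after normalizing it lands back in the geodesically convex cone). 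Because each $X^{(i)}_y$ lies strictly inside $C(w_y,\beta_y)$ and the cone is convex, the average again lies strictly inside; and the map $y\mapsto X_y$ is smooth because it is a finite smooth combination followed by normalization away from zero, and the $\rho_i$ and $X^{(i)}$ are smooth. This produces the desired smooth unit vector field strictly subordinate to $W$.

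The main obstacle — and the step I would be most careful about — is verifying that the averaging operation genuinely stays inside the cone and produces something smooth \emph{globally}, i.e. that the normalized sum $\sum_i \rho_i(y) X^{(i)}_y / \|\sum_i \rho_i(y) X^{(i)}_y\|$ is well-defined and cone-valued at every $y$. The key point is that a closed metric ball $C(w_y,\beta_y)$ of radius $\beta_y<\pi/2$ in the round sphere $S^{n-1}\subset\R^n$ is contained in the open half-space $\{v : \langle v, w_y\rangle > 0\}$, so any convex combination of points of the cone has strictly positive inner product with $w_y$ and hence is nonzero; after normalization the angle to $w_y$ does not increase beyond the maximum of the individual angles, keeping us in the cone — and strictly inside if every contributing vector was strictly inside. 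A secondary subtlety is ensuring the radial-parallel-transport local fields are themselves smooth up to the smoothness of $\exp$ and of the connection, which holds on geodesic balls inside the injectivity radius; one should choose the cover balls small enough that each sits inside such a normal neighbourhood of its center. Everything else — local finiteness of the cover, existence of the partition of unity, smoothness of finite sums — is standard.
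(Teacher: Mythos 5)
Your proposal is correct and follows essentially the same route as the paper: local fields obtained by parallel transport of the cone centres over small balls furnished by lower semicontinuity, glued with a smooth partition of unity, with the normalized convex combination remaining strictly inside the (acute, hence hemisphere-contained and convex) cone at each point. The extra care you take with the half-space argument for non-vanishing and with smoothness of the radial parallel transport only makes explicit what the paper states briefly.
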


\begin{proof}
Since $W$ is assumed to be lower semicontinuous, for each $x$ there exists a $\delta_x>0$ such that for every unit speed geodesic $\gamma$ with $\gamma(0)=x$ we have
$$\Gamma(\gamma)^t_0(C(w_x, \beta_x/2)) \subseteq C(w_{\gamma(t)}, \beta_{\gamma(t)})$$ for all $t\in (0,\delta)$.
This means that the vector field over $B(x, \delta_x)$ constructed by parallel transport of $w_x$ is strictly subordinate to $W|_{B(x, \delta_x)}$. Let $X_x$ denote this vector field. By construction $X_x$ is a smooth unit vector field on $B(x,\delta_x)$.

The set of open balls $\{B(x,\delta_x):x\in U\}$ cover $U$ and since $\M$ is paracompact there is a locally finite subcover $\{B(x_i, \delta_i)\}$ and a partition of unity $\{\rho_i\}$ subordinate to this cover. This means that there is a collection of functions $\rho_i:U \to [0,1]$ of smooth functions such that $\supp (\rho_i)\subseteq B(x_i, \delta_i)$ for each $i$ and for each $y\in U$ we have $\sum_i \rho_i(y)=1$.

Let $X=\sum_i\rho_iX_{x_i}$. This is defined over all of $U$ as the $X_{x_i}$ are all defined over the support of the corresponding $\rho_i$. It is smooth because the $X_{x_i}$ and the $\rho_i$ are all smooth. 

At each point $y \in U$ the vector $X(y)$, can be written as the sum $\sum_j a_j v_j$ where $\angle (v_j, w_y) < \beta_y<\pi/2$ and the $a_j \in [0,1]$ for each $j$ and $\sum_j a_j=1$. This implies that $\angle (\sum_j a_j v_j, w_y) <\beta_y$ and that $X(y)$ is nonzero. Thus we can conclude that $\frac{X(y)}{\|X(y)\|} \in W(y)$. 

Since $X$ is nowhere vanishing we can rescale the vectors at each point to construct a smooth vector field $\hat{X}$ of unit vectors. This vector field $\hat{X}$ is strictly subordinate to $W$. 
\end{proof}

We are interested in cone fields that reflect local geometric properties of the distance function to a set. We call $\gamma$ a \emph{segment} from $x\notin K$ to $K$ is if $\gamma$ is a distance achieving path from $x$ to $K$. If $\M$ is Euclidean then these segments are straight lines. Observe that on an arbitrary manifold there can be more than one segment connecting $x$ to the same $y \in K$. 

We say $C(w,\beta)$ is an \emph{$r$-spanning cone} for $K$ if 
$$\{\exp_x(tv): t\in[0,r], v\in C(w,\beta)\}\supseteq K\cap \overline{B(x,r)} \neq \emptyset.$$ We say $C(w,\beta)$ is a \emph{minimal} if whenever $C(w', \beta')$  is also an $r$-spanning cone then $\beta\leq\beta'$. We can see that when an acute minimal $r$-spanning cone exists then it is unique. 

\begin{lem}\label{lem:retract}
Let $0<\delta<r/2$, and let  $K, L$ be compact subsets of a manifold $\M$ with $d_H(K,L)\leq \delta$. Suppose that there exists an acute $r$-spanning cone field $W$ over $K_{[r-\delta, r+\delta]}$ for $K_\delta$. Let $W'$ be the complementary cone field to $W$. If $X$ is a smooth vector field strictly subordinate to $W'$, then $X$ induces a deformation retraction from $L_r$ to $K_{r-\delta}$.
\end{lem}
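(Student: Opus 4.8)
The plan is to flow along the vector field $X$ and show that the flow, appropriately time-reparametrized, deforms $L_r$ onto $K_{r-\delta}$ while staying inside $L_r$ the whole time. The basic geometric intuition: on the annular region $K_{[r-\delta,r+\delta]}$, the vectors of $X$ (being strictly subordinate to the complementary cone field $W'$) point "inward" with respect to $K$ — more precisely, they make an angle less than $\pi/2$ with every segment from the current point to $K_\delta$, by Lemma \ref{lem:comp} combined with the fact that the unit vectors realizing distance to $K_\delta$ lie in the $r$-spanning cone $W$. Hence following $-X$ (or $X$, depending on sign conventions) strictly decreases $d_{K_\delta}$, equivalently decreases $d_K$ once we are outside $K_\delta$, and can be used to push points from distance $r$ down to distance $r-\delta$.

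First I would set up the flow. The vector field $X$ is only defined (and only needs to be defined) on the annulus $K_{[r-\delta,r+\delta]}$; extend it smoothly and with bounded norm to a neighborhood, or better, multiply by a smooth bump function that is $1$ on $K_{[r-\delta,r]}$ and supported in $K_{[r-\delta,r+\delta]}$, so that the flow $\phi_t$ is complete and fixes everything outside the annulus. Note $L_r \subseteq K_{r+\delta}$ and $K_{r-\delta}\subseteq L_r$ since $d_H(K,L)\le\delta$; also $K_{r-\delta}\supseteq$ nothing is flowed past $r-\delta$ because we cut off the field there. The key monotonicity step: for a point $x$ in the open annulus $K_{(r-\delta,r+\delta)}$ and a segment $\gamma$ from $x$ to its nearest point in $K_\delta$, the initial velocity $\gamma'(0)$ is (the negative of) a unit vector lying in the $r$-spanning cone $C(w_x,\beta_x)$ — here one uses that $d(x,K_\delta)\le r$ so the nearest point of $K_\delta$ lies in $\overline{B(x,r)}$ and is therefore covered by the spanning cone. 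Since $X(x)$ lies strictly inside the complementary cone $C(w_x,\pi/2-\beta_x)$, Lemma \ref{lem:comp} gives $\angle(X(x), -\gamma'(0)) < \pi/2$, i.e. $\angle(X(x),\gamma'(0)) > \pi/2$. By the first variation formula for arc length, $\frac{d}{dt}\big|_{t=0} d_{K_\delta}(\phi_t(x)) = -\langle X(x), \gamma'(0)\rangle$ (over all segments, taking the max), which is $< 0$. So $d_{K_\delta}$, hence $d_K$, is strictly decreasing along flow lines inside the annulus.

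Next I would build the deformation retraction itself. For $x\in L_r$ with $d_K(x)\le r-\delta$, set the homotopy to be constant. For $x\in L_r$ with $d_K(x)\in(r-\delta, r]$ — note $L_r\subseteq K_{r+\delta}$ but in fact $x\in L_r$ forces $d_K(x)\le r+\delta$, and the points with $d_K(x)\in(r,r+\delta]$ must first be handled; here one observes $d_H(K,L)\le\delta$ gives $L\subseteq K_\delta$ so $L_r\subseteq (K_\delta)_r = K_{r+\delta}$, but more usefully $L_r \subseteq (K_{\delta})_{r}$ and every point of $L_r$ is within $r$ of $L\subseteq K_\delta$, so $d_{K_\delta}(x)\le r$ for all $x\in L_r$. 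That is the clean inequality to use: on $L_r$ we have $d_{K_\delta}\le r$, and the monotonicity pushes this down. Define $H(x,s)$ by flowing $x$ along $X$ until $d_{K_\delta}$ first reaches $0$ (equivalently $d_K$ first reaches $\delta$, equivalently the point enters $K_\delta$); reparametrize $s\in[0,1]$ so $H(x,0)=x$ and $H(x,1)\in \partial K_\delta \subseteq K_{r-\delta}$ wait — I should push all the way down into $K_{r-\delta}$, so actually flow until $d_K$ reaches $r-\delta$; since $X$ is cut off below $r-\delta$ this is exactly the stopping locus. Because $d_{K_\delta}$ (hence $d_K$) is strictly monotone decreasing with derivative bounded away from $0$ on any compact sub-annulus, the hitting time is finite, depends continuously on $x$, and the reparametrized map $H:L_r\times[0,1]\to \M$ is continuous. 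One must check $H(x,s)\in L_r$ throughout: since $d_K$ only decreases along the flow and $d_K(x)\le r$ wait, not quite — points with $d_K(x)\in(r,r+\delta]$... I'll handle those by noting $d_{K_\delta}(x)\le r$ decreases, and $d_K \le d_{K_\delta}+\delta \le r+\delta$ throughout is not good enough; instead use $d_L(x)\le r$ and that $d_L$ also decreases along $X$ (the same cone argument with $L$ in place of $K_\delta$, using $d_H(K,L)\le\delta$ so segments to $L$ are close to segments to $K_\delta$, or directly: $K\subseteq L_\delta$ and $L\subseteq K_\delta$ sandwiches the distance functions within $\delta$, and the spanning cone for $K_\delta$ covers $L\cap\overline{B(x,r)}$). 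This shows $d_L$ is nonincreasing along $H$, so $H(x,s)\in L_r$ for all $s$. Finally $H(\cdot,1)$ lands in $K_{r-\delta}$ and $H(\cdot,1)|_{K_{r-\delta}}=\mathrm{id}$, so $H$ is a deformation retraction (a strong one).

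The main obstacle I expect is not the monotonicity — that falls out of Lemma \ref{lem:comp} and the first variation formula — but rather the \emph{continuity and well-definedness of the hitting-time reparametrization} at the boundary of the region where the flow does something, together with the subtlety that $L_r$ may contain points at $K$-distance slightly more than $r$ (up to $r+\delta$) while the spanning cone field is only assumed over $K_{[r-\delta,r+\delta]}$ — so one must carefully track which distance function ($d_K$, $d_{K_\delta}$, or $d_L$) is monotone where, and verify that the flow never pushes a point of $L_r$ out of $L_r$. Making the hitting time continuous requires a uniform lower bound on $|\frac{d}{dt}d_{K_\delta}(\phi_t(x))|$ over the relevant compact region, which one gets from compactness of $K_{[r-\delta,r+\delta]}\setminus \int K_{r-\delta}$ together with strict subordination (the angle is bounded away from $\pi/2$, uniformly, by a compactness argument on the section $X$ into the open complementary cones). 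I would also need to confirm that segments from $x$ realize the distance and that the first variation formula applies with a possibly non-unique nearest point — standard, since $d_{K_\delta}$ is Lipschitz and its lower Dini derivative along $X(x)$ equals $\max$ over segments of $-\langle X(x),\gamma'(0)\rangle < 0$.
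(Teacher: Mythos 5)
Your proposal is correct and follows essentially the same route as the paper: flow along $X$, observe that the directions to nearby points of $L\subseteq K_\delta$ lie in the $r$-spanning cone so (by Lemma \ref{lem:comp}) the flow never leaves $L_r$, obtain a uniform positive lower bound on the rate of decrease of the distance to $K$ by compactness of the annulus, and stop each trajectory upon reaching $K_{r-\delta}$. The only cosmetic difference is that the paper certifies that the flow stays inside each ball $B(y,r)$, $y\in L$, via Gauss's lemma in normal coordinates at $y$, whereas you invoke the first variation formula to show $d_L$ (equivalently $d(\cdot,y)$) is nonincreasing --- the same fact in different clothing.
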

\begin{proof}
Since $X$ is a smooth vector field it has a unique smooth integral flow (a standard result, for example \cite{irwin1980}). The idea is to follow this flow from each point in $L_r$ until it reaches $K_{r-\delta}$. 

Denote the acute $r$-spanning cone field $W$ by $\{(x,C(w_x, \beta_x))\}$. From Lemma \ref{lem:comp} we know that any vector sitting strictly inside $C(w_x, \frac{\pi}{2}-\beta_x)$ forms an acute angle when paired with angle vector in $C(w_x, \beta_x)$. 

Let $X$ be a vector field strictly subordinate to $W'$. Let $x \in L_r \cap  K_{[r-\delta, r+\delta]}$, and let $v$ be the vector at $x$ in X.  Now $x \in B(y,r)$ for some $y \in L$. Let $\gamma_x^y$ denote a geodesic from $x$ to $y$ of length at most $r$, and ${\gamma'}_x^y(0)$ its tangent vector at $x$. By construction, ${\gamma'}_x^y(0) \in C(w_x, \beta_x)$ and hence it forms an acute angle with $v$. This means that their images form an acute angle in the normal coordinates  given by the exponential map at $x$.

Now consider the normal coordinates given by  the exponential map at $y$. In these coordinates $\gamma_x^y$ is a radial straight line emitted from the origin. Gauss's lemma (see \cite{leecurve}) tells us that the angle between $\gamma_v$  and $\gamma_x^y$ in the normal coordinates at $y$ is acute if and only if the angle between them in the normal coordinates  based at $x$ is acute. We have already shown that this second angle is acute. This means that in the normal coordinates given by the exponential map at $y$, $\gamma_v$ must remain inside  $B(0,r)$ for some positive amount of time. Since this is true all $x$ we know that the integral flow does not leave $L_r$. 

The integral flow of $X$ is always traveling towards $K$ as it lies in $W'$. For each $x\in K_{[r-\delta,r+\delta]}$, let $\lambda_x$ be the rate at which the integral flow of $X$ at $x$ is traveling towards $K$. Since $\lambda_x >0$ for all $x \in K_{[r-\delta,r+\delta]}$ and $K_{[r-\delta,r+\delta]}$ is compact there is some $\lambda>0$ which forms a lower bound on how fast  the integral flow of $X$ travels towards $X$. 

Construct the deformation retraction from $L_r$ to $K_{r-\delta}$ by following each point along the flow of $X$ until it reaches $K_{r-\delta}$ and then remaining stationary. The uniform lower bound on how fast the integral flow of $X$ travels towards $K$ combined with the observation that every point in $L_r$ is at most $2\delta$ from $K_{r-\delta}$, tells us that in a finite amount of time every point in $L_r$ will be sent to one in $K_{r-\delta}$.
\end{proof}

\section{Stability of $\mu$-critical points}

We want to study the gradient vector fields for distance functions from compact subsets of a general manifold $(\M, g)$. This can be thought of as the obvious generalization of the gradient vector fields for distance functions from compact subsets of Euclidean space (as studied in \cite{compactcrit} and \cite{medialaxis}). 


Recall that $\gamma$ is a segment from $x\notin K$ to $K$ is if $\gamma$ is a distance achieving path from $x$ to $K$. The point $x$ is a called a \emph{critical point} of the distance function from $K$ if, for all non-zero $v \in T_x \M$, there exists a segment $\gamma$ from $x$ to $K$ such that $\angle(\gamma'(0), v) \leq \pi/2$. Equivalently, if $\Gamma(x) := \{y \in K : d_K(x) = d(x,y)\}$, then $x$ is a critical point if and only if $0$ lies in the convex hull of $\exp_x^{-1}\Gamma(x)\cap \overline{B(0,d_K(x))}$. 

We need to construct the gradient vector field so that it vanishes at critical points of the distance function. For all non-critical points we can consider the minimal spanning cone $C(w_x, \beta_x)$ for $\Gamma(x)$ from $x$ of length $d_K(x)$. We set 
$$\nabla_K(x) : = -\cos(\beta_x)w_x$$
whenever $x$ is not critical. Observe that $ \nabla_{K_a}(x) = \nabla_{K} (x)$ whenever $d_K(x) >a$. For $\mu \in \R$, we call $x$ \emph{$\mu$-critical} if $\|\nabla_K(x)\| \leq \mu$. A point is $0$-critical exactly when it is a critical point for the distance function. 

It is easy to verify that these definitions agree with those given in \cite{compactcrit} when $\M$ is Euclidean.

We will want to prove a generalization of the stability result in \cite{compactcrit} where the ambient space is a manifold with curvature bounded from below in a suitable neighborhood of the compact subset under study. By appropriate scaling it is sufficient to consider the cases where the curvature is bounded from below by $1$, $0$ or $-1$. 

\begin{lem}\label{lem:toponogov}
Let $K\subset \M$ be a compact subset of a manifold $(\M, g)$ for which the curvature  on $K_{2\alpha}$ is bounded from below by $\kappa$. Let $x\in K_\alpha \backslash K$ be a $\mu$-critical point of $d_K$. Then for any $y \in K_{2\alpha}$ we have
\begin{align*}
\cos d_K(y)  \geq &\cos d(y,x) \cos d_K(x)- \sin d(y,x)\sin d_K(x) \mu  &\text{if }\kappa=1\\
 d_K(y)^2 \leq &d_K(x)^2 + d(x,y)^2 + 2 d(y,x)d_K(x) \mu &\text{if }\kappa=0\\
 \cosh d_K(y) \leq & \cosh d(y,x) \cosh d_K(x)+ \sinh d(y,x)\sinh d_K(x) \mu &\text{if }\kappa=-1.
  \end{align*}
\end{lem}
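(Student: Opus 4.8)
The plan is to prove all three cases simultaneously by the same geometric argument, using Toponogov's triangle comparison theorem to pass from the manifold to the model space of constant curvature $\kappa$. Fix the $\mu$-critical point $x$ and let $d_K(x) = \rho$. Pick any $y \in K_{2\alpha}$ and let $z \in K$ be a closest point to $y$, so $d(y,z) = d_K(y)$. The idea is to look at the geodesic triangle with vertices $x$, $y$, $z$ (all of whose geodesic segments lie inside $K_{2\alpha}$, where the curvature bound holds), and to estimate $d(x,z)$, which is at least $d_K(x) = \rho$ since $z \in K$.

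First I would set up the comparison. Let $\gamma$ be a segment from $x$ to $K$ realising $d_K(x) = \rho$, ending at some $q \in \Gamma(x)$. The $\mu$-criticality says exactly that the minimal spanning cone $C(w_x,\beta_x)$ for $\Gamma(x)$ has $\cos\beta_x = \|\nabla_K(x)\| \le \mu$, and $-w_x = \nabla_K(x)/\|\nabla_K(x)\|$ points into this cone; geometrically, for \emph{every} unit $v \in T_x\M$ there is a segment $\gamma$ from $x$ to $K$ with $\angle(\gamma'(0),v) \le \beta_x$, where $\cos\beta_x \le \mu$ — equivalently, $\angle(\gamma'(0),v)$ can be taken with $\cos$ of that angle $\ge -\mu$ when $v$ points "away", but the clean statement is: taking $v$ to be the initial direction of the segment $[x,y]$ from $x$ to $y$, there is a closest point $q \in K$ and a segment from $x$ to $q$ whose initial direction makes an angle $\le \pi/2$ with $-v$ is too weak; rather I want the sharp bound $\angle(\text{initial dir. of }[x,q],\ \text{initial dir. of }[x,y]) \le \pi - \arccos(-\mu)$... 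Let me instead use the form actually needed: for the unit vector $u$ pointing from $x$ along $[x,y]$, choose $q\in\Gamma(x)$ and a segment from $x$ to $q$ with initial direction $v_q$ so that $\cos\angle(u, v_q) \le \mu$ (this is the precise content of the minimal spanning cone condition applied to $-u$). Now apply Toponogov to the hinge at $x$ formed by the segments $[x,y]$ (length $d(x,y)$) and $[x,q]$ (length $\rho$) with included angle $\theta := \angle(u,v_q)$, $\cos\theta \le \mu$: in the model space $M_\kappa$, the side opposite this hinge has length $\ge$ the law-of-cosines value. Since $q \in K$, $d(y,q) \ge d_K(y)$, so
\begin{equation*}
d_K(y) \le d(y,q) \le \text{(model-space opposite side for hinge }(d(x,y),\rho,\theta)\text{)}.
\end{equation*}
Then I would plug $\cos\theta \le \mu$ into the explicit spherical / Euclidean / hyperbolic law of cosines and check monotonicity in $\theta$ to get exactly the three displayed inequalities, after also replacing $d(x,y)$ and noting the formulas are monotone so the bound $\cos\theta\le\mu$ propagates correctly (for $\kappa=1$ one needs $d(x,y), \rho < \pi$, which holds in the relevant range; injectivity-radius hypotheses elsewhere in the paper guarantee segments behave well).

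The main obstacle I anticipate is getting the \emph{direction} of the Toponogov inequality right in each case and justifying the angle bound $\cos\theta\le\mu$ rigorously from the minimal-spanning-cone definition of $\nabla_K$. Toponogov's theorem for a lower curvature bound $\kappa$ says hinge distances in $\M$ are \emph{at least} those in $M_\kappa$; so the model law of cosines gives a \emph{lower} bound on $d(y,q)$, which is the wrong direction if I want an \emph{upper} bound on $d_K(y)$. The resolution — and the real content of the lemma — must instead use the \emph{other} Toponogov conclusion: comparison of \emph{angles}. Concretely, form the triangle $x,y,z$ with $z$ the closest point of $K$ to $y$; build the comparison triangle $\tilde x,\tilde y,\tilde z$ in $M_\kappa$ with the same side lengths; Toponogov gives $\angle_x(y,z) \ge \tilde\angle_{\tilde x}(\tilde y,\tilde z)$. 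Meanwhile the criticality of $x$ forces $\angle_x(y,z')$ small for some closest point $z'$, hence (using that $z$ is a closest point to $y$, a variational/first-variation argument comparing $z$ and $z'$) one bounds $\angle_x(y,z)$ from above, so $\tilde\angle_{\tilde x}(\tilde y,\tilde z)$ is bounded above, and then the model-space law of cosines — now applied in $M_\kappa$ with a \emph{known upper bound on the angle at }$\tilde x$ and \emph{known} side lengths $d(x,y)$ and $d(x,z)=d_K(z)\ge\rho$... wait, we need $d(x,z)\le$ something. The cleanest route is: use $\|\nabla_{K_a}\|$-monotonicity-type estimates (the analogue of Lemma 4 in \cite{compactcrit}) — i.e. bound $d_K(z)$ directly. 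I would therefore organise the proof as: (1) reduce to a hinge at $x$ with one side the segment to the closest point of $K$ realising small angle, using the definition of $\mu$-critical; (2) note $d_K(y) \le d(y, q)$ for that closest point $q$ of $x$; (3) estimate $d(y,q)$ by the triangle $x,y,q$ via Toponogov's \emph{hinge version} in the form that bounds the opposite side \emph{above} when the comparison goes the favourable way — which for a \emph{lower} curvature bound is the version controlling $d(y,q)$ given the angle at $x$ is \emph{large}; since we want $\cos\theta \le \mu$ giving $\theta$ possibly near $\pi$, this is consistent. (4) Conclude by the explicit law of cosines in $S^n_\kappa$, using monotonicity of $t \mapsto -\cos t$, $t\mapsto t^2$ (with sign), $t\mapsto\cosh$ to replace $\cos\theta$ by $\mu$. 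So the delicate point is step (1)–(3): correctly invoking Toponogov in the direction that yields an \emph{upper} bound on $d_K(y)$, which works precisely because the angle $\theta$ at $x$ (between "toward $y$" and "toward $K$") is forced to be at least $\arccos\mu$ on its complement — i.e. we're bounding a side opposite a \emph{large} angle, and large-angle hinges are controlled from below in curvature, equivalently the opposite side is controlled from above in the direction stated. I expect roughly half a page of careful angle bookkeeping plus three short law-of-cosines computations.
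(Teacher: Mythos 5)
Your overall architecture is the paper's: pick a closest point $q\in\Gamma(x)$ whose segment from $x$ makes a controlled angle with the direction $u$ toward $y$, compare the hinge (or the triangle $x,y,q$) with the model space $M_\kappa$ via Toponogov, apply the law of cosines, and finish with $d_K(y)\le d(y,q)$ and $d(x,q)=d_K(x)$. But the two quantitative pivots are stated in the wrong direction, and as written the argument does not produce the claimed inequalities. First, the angle bound. What $\mu$-criticality gives you, and what the law of cosines needs, is $\cos\angle(u,v_q)\ge-\mu$, i.e.\ $\angle(u,v_q)\le\pi-\theta$ with $\theta=\arccos\mu$. You get this by applying the spanning-cone condition to $+u$, not to $-u$: if every initial direction of a segment from $x$ to $\Gamma(x)$ made an angle greater than $\pi-\theta$ with $u$, then the normalized set $\widehat{\Gamma(x)}$ would lie strictly inside $C(-u,\theta)$, so the minimal spanning cone would have half-angle less than $\theta$ and $\|\nabla_K(x)\|>\cos\theta=\mu$, a contradiction. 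The bound you commit to, $\cos\angle(u,v_q)\le\mu$ (which is indeed what applying the condition to $-u$ yields, and your ``sharp bound'' $\le\pi-\arccos(-\mu)=\arccos\mu$ is false in general, e.g.\ for $\widehat{\Gamma(x)}=\{e_1,-e_1\}$ and $u=e_2$), is a \emph{lower} bound on the hinge angle; fed into $c^2=a^2+b^2-2ab\cos\alpha$ it gives a lower bound on the opposite side, and the only upper bound you can salvage from it is the trivial triangle inequality $c\le a+b$, not $c^2\le a^2+b^2+2ab\mu$.

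Second, Toponogov. For $\sec\ge\kappa$ the hinge version says the opposite side in $\M$ is \emph{at most} the model value (equivalently, the comparison triangle in $M_\kappa$ with the same side lengths has angles \emph{at most} the actual angles) --- geodesics spread more slowly in higher curvature, so endpoints are closer. Your assertion that hinge distances are ``at least'' those in $M_\kappa$ is backwards, and it is what sends you on the detour through the closest point $z$ of $K$ to $y$ (which then requires controlling $d(x,z)$, a complication the paper avoids by taking $z\in\Gamma(x)$). Your eventual ``resolution'' via a special ``large-angle'' form of the hinge theorem is not a real statement; no largeness hypothesis is needed. With the two directions corrected --- angle at $x$ at most $\pi-\theta$, hence $\cos$ of the comparison angle at least $-\mu$, and opposite side at most the model value --- the three law-of-cosines computations give exactly the displayed inequalities, which is precisely the paper's proof.
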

  
\begin{proof}
Let $\theta \in (0,\pi/2]$ such that $\cos \theta=\mu$. Let $\Gamma(x)= \{z\in K: d_K(z)=d(z,x)\}$ and set
$$\widehat{\Gamma(x)}:=\{z/\|z\| : z \in \exp_x^{-1}(\Gamma(x))\cap \overline{B(0,d_K(x))}\}.$$
Fix $y\in K_{2\alpha}$ and choose $v\in T_x\M$ such that $\exp_x(d(x,y)v)=y$.

We want to show that there is some $z \in K$ and length achieving geodesics $\gamma_x^y$ and $\gamma_x^z$ such that $\angle (y,x,z) \leq \pi -\theta$ where $\angle (y,x,z)$ is the angle between $\gamma_x^y$ and $\gamma_x^z$. Suppose not. This means that no point of $\widehat{\Gamma(x)}$ lies in $C(v, \pi-\theta)$. Geometrically this means that $\widehat{\Gamma(x)}$ must lie in the interior of $C(-v, \theta)$ which is the complement of $C(v, \pi-\theta)$ in the sphere.

However this implies that the minimal spanning cone for $\Gamma(x)$ from $x$ of length $d_K(x)$ lies strictly inside $C(-v, \theta)$ and  hence $\|\nabla_K(x)\| > \cos \theta =\mu$. This contradicts the assumption that $x$ is a $\mu$-critical point (i.e. $ \| \nabla_K(x)\|\leq \mu$). Thus by contradiction, there is some point $z \in K$
and length achieving geodesics $\gamma_x^y$ and $\gamma_x^z$ such that $\angle (y,x,z) \leq \pi -\theta$. 

Let $\triangle_{x,y,z}$ be the geodesic triangle with $\gamma_x^y$ and $\gamma_x^z$ such that $\angle (y,x,z) \leq \pi -\theta$ which we have just shown must exist. Let $\tilde \triangle_{\tilde x,\tilde y,\tilde z}$ be the corresponding triangle in $\M (\kappa)$, the manifold with constant curvature $\kappa$, where the length of the sides are preserved.  Toponogov's theorem is a triangle comparison theorem which quantifies the assertion that a pair of geodesics emanating from the same point spread apart more slowly in a region of high curvature than they would in a region of low curvature. The details of this theorem and its proof can be found in \cite{CheegerEbin}. By taking the contrapositive of Toponogov's theorem we know $\angle (\tilde y, \tilde x, \tilde z) \leq \angle (y,x,z) \leq \pi -\theta$ and hence $\cos \angle (\tilde y, \tilde x, \tilde z) \geq - \mu.$ 

We finally substitute $d_K(y) \leq d(\tilde{y},\tilde{z})$, $d_K(x) = d(\tilde{x},\tilde{z})$ and $\cos \angle (\tilde y, \tilde x, \tilde z) \geq - \mu$ into the spherical, Euclidean and hyperbolic cosine rules respectively to obtain the desired inequalities.
%
\end{proof}

Our stability result will arise from comparing two opposing inequalities - one from the previous lemma alongside one coming from the following lemma which is Lemma 4.1 in \cite{Lyt06}. It is easy to check the definition of $\| \nabla_K(x)\|$ coincides with $\| \nabla_x f\|$ when $f=d_K$ and $X$ is a manifold.\footnote{ $\| \nabla_x f\|$ is the nonnegative number $\max\{0,\limsup_{y\to x} \frac{f(y)-f(x)}{d(y,x)}\}$. That this is $\| \nabla_K(x)\|$ follows from our geometric construction of $\nabla_K(x)$ and from the cosine rule.}

\begin{lem}[Lemma 4.1 in \cite{Lyt06}] \label{lem:Lyt}
Let $X$ be a metric space. Suppose $f : X \to \mathbb{R}$ is a locally Lipschitz map, $x \in X$, and $f(x) = 0$. For
$\mu, r > 0$, assume that the ball $\overline{B(x,r)}$ is complete and that $\|\nabla_z f\| \geq \mu$ for each $z$ with
$d(z, x) < r$ and $ f(z) \geq 0$. Then for each $0 <C<\mu$ there is a point $z \in X$ with
$d(z, x) \leq r$ and $f(z) = Cr$.
\end{lem}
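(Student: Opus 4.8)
Fix $\mu'$ with $C<\mu'<\mu$; it suffices to produce a point $z$ with $d(z,x)\le r$ and $f(z)=Cr$. Call a finite sequence $x=z_0,z_1,\dots,z_k$ an \emph{ascending $\mu'$-chain} if $0=f(z_0)\le f(z_1)\le\cdots\le f(z_k)\le Cr$ and $f(z_{j+1})-f(z_j)\ge\mu'\,d(z_j,z_{j+1})$ for every $j$. The role of the normalisation $C<\mu'$ is that for any such chain $\sum_j d(z_j,z_{j+1})\le\tfrac{1}{\mu'}\sum_j\bigl(f(z_{j+1})-f(z_j)\bigr)=\tfrac{1}{\mu'}f(z_k)\le Cr/\mu'<r$. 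Hence every point appearing on an ascending $\mu'$-chain starting at $x$ lies in the open ball $B(x,r)$ and has $f\ge0$, so the hypothesis $\|\nabla_z f\|\ge\mu$ is available at each such point.

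\textbf{Local ascent step.} If $z$ is a point as above with $f(z)<Cr$, then $\|\nabla_z f\|\ge\mu>\mu'$ forces $\limsup_{w\to z}\frac{f(w)-f(z)}{d(w,z)}\ge\mu>\mu'$, so there are $w\ne z$ arbitrarily close to $z$ with $f(w)-f(z)>\mu'\,d(w,z)$; since $f$ is continuous (being locally Lipschitz) and $f(z)<Cr$, we may take such a $w=:y$ close enough that $f(z)<f(y)<Cr$. Appending $y$ extends the chain and strictly raises $f$. In particular the construction can be started at $x$ itself.

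\textbf{Passing to a limit.} Iterating, build a sequence $x=z^{(0)},z^{(1)},z^{(2)},\dots$ where $z^{(n+1)}$ is reached from $z^{(n)}$ by a (possibly multi-step) ascending $\mu'$-subchain chosen so that $f(z^{(n+1)})$ is within $1/n$ of $M_n:=\sup\{f(w): w\text{ chain-reachable from }z^{(n)}\}$ (which is finite, $\le Cr$). The whole concatenation is a single ascending $\mu'$-chain, so $\sum_n d(z^{(n)},z^{(n+1)})\le Cr/\mu'<\infty$, the sequence is Cauchy, and by completeness of $\overline{B(x,r)}$ it converges to some $z^{\ast}$. Telescoping with the triangle inequality gives $f(z^{\ast})-f(z^{(n)})\ge\mu'\,d(z^{(n)},z^{\ast})$ for every $n$; taking $n=0$ yields $d(x,z^{\ast})\le f(z^{\ast})/\mu'\le Cr/\mu'<r$, and $f(z^{\ast})=\lim_n f(z^{(n)})=:L\ge0$ by continuity. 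One checks from $f(z^{(n)})\le M_n$ and $f(z^{(n+1)})\ge M_n-1/n$, together with $M_n$ non-increasing, that $M_n\downarrow L$.

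\textbf{Conclusion and the main obstacle.} It remains to show $L=Cr$. If $L<Cr$, the local ascent step applies at $z^{\ast}$ (since $d(z^{\ast},x)<r$, $f(z^{\ast})\ge0$, $f(z^{\ast})<Cr$) and produces $y$ with $f(y)-f(z^{\ast})>\mu'\,d(z^{\ast},y)$ and $L<f(y)<Cr$; combining with $f(z^{\ast})-f(z^{(n)})\ge\mu'\,d(z^{(n)},z^{\ast})$ and the triangle inequality shows $z^{(n)}\to y$ is itself a legitimate ascending $\mu'$-step, so $y$ is chain-reachable from every $z^{(n)}$ and $f(y)\le M_n\to L<f(y)$, a contradiction. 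Hence $L=Cr$ and $z:=z^{\ast}$ has the required properties. The only real difficulty is that $X$ is merely a metric space, so there is no curve along which to ``flow up the gradient'' and invoke the intermediate value theorem; this is replaced by the chain bookkeeping above. The two points that make it work are the length estimate $Cr/\mu'<r$, which keeps the hypotheses valid all along the chain, and the verification that the limit $z^{\ast}$ still connects back to each $z^{(n)}$ at rate $\mu'$ — the telescoping-plus-triangle-inequality bound — which is exactly what lets the maximality argument close.
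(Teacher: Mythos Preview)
The paper does not prove this lemma; it is quoted verbatim as Lemma~4.1 of \cite{Lyt06} and used as a black box. So there is no in-paper proof to compare against.

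Your argument is a correct self-contained proof. It is essentially a discrete steepest-ascent/Ekeland-type construction: the chain inequality $\sum d(z_j,z_{j+1})\le f(z_k)/\mu'\le Cr/\mu'<r$ is exactly what confines the whole construction to the open ball where the gradient lower bound is assumed, and the near-maximiser choice of $z^{(n+1)}$ together with the telescoped bound $f(z^\ast)-f(z^{(n)})\ge\mu' d(z^{(n)},z^\ast)$ is what makes the contradiction at the limit go through. Both of these are the genuine content of the lemma, and you have identified and handled them correctly.

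Two cosmetic points. First, ``within $1/n$ of $M_n$'' is ill-posed at $n=0$; start the near-sup selection at $n=1$ (or use $1/(n+1)$). Second, when you assert that the $z^{(n)}$ form a Cauchy sequence, it is the telescoped inequality $f(z^{(m)})-f(z^{(n)})\ge \mu' d(z^{(n)},z^{(m)})$ (obtained from the multi-step subchains via the triangle inequality) that you are really using, rather than the raw sum $\sum_n d(z^{(n)},z^{(n+1)})$; you invoke this inequality later anyway, so it would be cleaner to state it once up front. Neither affects correctness.
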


%

The following proposition is a generalization of critical point stability theorem in \cite{compactcrit} where the ambient space can now be any manifold with non-negative curvature.

\begin{prop}\label{prop:stable}
Let $K,L$ be compact subsets of $\M$ with $d_H(K,L)\leq\delta$. Let $x$ be a $\mu$-critical point of $d_{K}$.  If $$C\geq\mu + 2\sqrt{\frac{\delta}{d_{K}(x)}}$$ and $K_{d_K(x) + 4\delta/(C-\mu)}$ has nowhere negative curvature, then there exists a $C$-critical point $y$ of $d_L$ with $d_L(y)\geq d_L(x)$ and $y\in \overline{B(x, 4\delta/(C-\mu))}$.
\end{prop}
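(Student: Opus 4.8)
The plan is to combine the two one-sided inequalities on $d_K$ near $x$ with the variational Lemma~\ref{lem:Lyt} applied to an auxiliary function built from $d_L$, using $d_H(K,L)\le\delta$ to pass back and forth between $d_K$ and $d_L$. Set $R := 4\delta/(C-\mu)$ and let $\rho := d_K(x)$. First I would record the elementary consequences of the Hausdorff bound: $|d_K(z)-d_L(z)|\le\delta$ for all $z$, and in particular $d_L(x)\ge\rho-\delta$ and $d_L(x)\le\rho+\delta$. Then I would apply Lemma~\ref{lem:toponogov} in the $\kappa=0$ case to the $\mu$-critical point $x$ of $d_K$: for every $y\in K_{2\alpha}$ (here with $\alpha$ chosen so that $\rho+R\le 2\alpha$, which is where the curvature hypothesis on $K_{\rho+4\delta/(C-\mu)}$ is used),
\begin{equation*}
d_K(y)^2 \le \rho^2 + d(x,y)^2 + 2d(x,y)\rho\mu .
\end{equation*}
Rewriting, $d_K(y) \le \rho\sqrt{1 + 2\mu t + t^2}$ where $t = d(x,y)/\rho$, and this is bounded above by $\rho + \mu\, d(x,y) + d(x,y)^2/(2\rho)$, i.e. the value of $d_K$ can grow at most at ``rate $\mu$ plus an error controlled by $\delta/\rho$'' as one moves away from $x$.

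Next I would set up the function for Lemma~\ref{lem:Lyt}. The natural choice is $f(z) := d_L(z) - d_L(x)$ on the closed ball $\overline{B(x,R)}$, so $f(x)=0$; completeness of the ball holds since $\M$ is a complete manifold (or one restricts to the relevant compact neighborhood). I want to argue by contradiction: suppose there is \emph{no} $C$-critical point $y$ of $d_L$ with $d_L(y)\ge d_L(x)$ in $\overline{B(x,R)}$. Then for every $z$ with $d(z,x)< R$ and $f(z)\ge 0$ (equivalently $d_L(z)\ge d_L(x)$), we have $\|\nabla_z d_L\| = \|\nabla_z f\| > C$, so the hypothesis of Lemma~\ref{lem:Lyt} is met with the constant $C$ in place of ``$\mu$''. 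Lemma~\ref{lem:Lyt} then produces, for any $C' < C$, a point $z\in\overline{B(x,R)}$ with $f(z) = C'R$, i.e. $d_L(z) = d_L(x) + C'R$. Taking $C'\uparrow C$ (or using that the image of $f$ on the ball contains $[0,CR)$ and $d_L$ is continuous on the compact ball) gives a point $z^\star$ with $d(z^\star,x)\le R$ and $d_L(z^\star)\ge d_L(x) + CR - (\text{small})$; cleanly, I would extract $z^\star$ with $d_L(z^\star) \ge d_L(x) + CR$ in the limit, or just work with $C'$ close to $C$ and let it tend to $C$ at the end.

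Now I would derive the contradiction by comparing the two estimates at $z = z^\star$. On one hand, from the variational step, $d_L(z^\star) \ge d_L(x) + C'R \ge (\rho - \delta) + C'R$. On the other hand, from Toponogov (Lemma~\ref{lem:toponogov}) together with $|d_K - d_L|\le\delta$,
\begin{equation*}
d_L(z^\star) \le d_K(z^\star) + \delta \le \rho + \mu\, d(x,z^\star) + \frac{d(x,z^\star)^2}{2\rho} + \delta \le \rho + \mu R + \frac{R^2}{2\rho} + \delta .
\end{equation*}
Combining, $(\rho-\delta) + C'R \le \rho + \mu R + R^2/(2\rho) + \delta$, i.e. $(C'-\mu)R \le 2\delta + R^2/(2\rho)$. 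With $R = 4\delta/(C-\mu)$ this reads, after letting $C'\to C$, $(C-\mu)R = 4\delta$, so $4\delta \le 2\delta + R^2/(2\rho)$, i.e. $R^2 \ge 4\delta\rho$, i.e. $R \ge 2\sqrt{\delta\rho}$, i.e. $4\delta/(C-\mu) \ge 2\sqrt{\delta\rho}$, which rearranges to $C - \mu \le 2\sqrt{\delta/\rho} = 2\sqrt{\delta/d_K(x)}$ — precisely the negation of the hypothesis $C \ge \mu + 2\sqrt{\delta/d_K(x)}$ (at the boundary case one needs the inequalities to be strict somewhere, which the $C'<C$ slack provides). This contradiction establishes the existence of the desired $C$-critical point $y$ of $d_L$ with $d_L(y)\ge d_L(x)$ and $y\in\overline{B(x,R)} = \overline{B(x,4\delta/(C-\mu))}$.

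The main obstacle I anticipate is handling the boundary/limiting case cleanly: Lemma~\ref{lem:Lyt} only gives points with $f(z) = C'R$ for $C' < C$, and the $C$-criticality at the output point $y$ must be squeezed out in the limit $C'\to C$ (using compactness of the ball, continuity of $d_L$, and upper semicontinuity of $z\mapsto\|\nabla_z d_L\|$, which should follow from the geometric description of $\nabla$). A secondary technical point is making sure the curvature-lower-bound hypothesis is invoked on a large enough set: $z^\star$ lies within $R$ of $x$ and $d_K(x)=\rho$, so all relevant points lie in $K_{\rho + R} = K_{d_K(x) + 4\delta/(C-\mu)}$, which is exactly the set assumed to have nowhere negative curvature, so Lemma~\ref{lem:toponogov} with $\kappa=0$ applies with $2\alpha := \rho + R$. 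One should also double-check that $d(x,z^\star)\le R$ keeps $z^\star$ outside $L$ (so that $\nabla d_L$ and the $C$-critical notion make sense there), which follows as long as $d_L(x) - R > 0$; this is where an inequality like $(4+\mu^2)\delta < \mu^2 r$ from the main theorem ultimately enters, though for the proposition itself the statement is local and one can assume $d_K(x)$ is large enough relative to $\delta$ for the conclusion to be non-vacuous.
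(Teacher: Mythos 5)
Your overall skeleton matches the paper's: argue by contradiction, use Lemma~\ref{lem:Lyt} with $f=d_L-d_L(x)$ and $r=4\delta/(C-\mu)$ to manufacture a point where $d_L$ has grown by about $C\cdot 4\delta/(C-\mu)$, convert to a lower bound on the growth of $d_K$ via $d_H(K,L)\le\delta$ (losing $2\delta$), and play this off against the $\kappa=0$ Toponogov bound of Lemma~\ref{lem:toponogov}. However, you are missing the one genuinely non-obvious idea in the paper's proof: before running this comparison, the paper replaces $K$ and $L$ by the offsets $K'=K_{d_K(x)-4\delta/(C-\mu)^2}$ and $L'=L_{d_K(x)-4\delta/(C-\mu)^2}$ (legitimate because $\nabla_{K_a}(x)=\nabla_K(x)$ for $a<d_K(x)$, and Hausdorff distance does not increase under taking equal offsets), so that $d_{K'}(x)=4\delta/(C-\mu)^2$ is \emph{small and of the same order as the ball radius}. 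This normalization is exactly what makes the Toponogov constraint bite: the admissible growth $\sqrt{\rho^2+s^2+2\mu\rho s}-\rho$ is a \emph{decreasing} function of $\rho=d_K(x)$, so shrinking $\rho$ by offsetting strengthens the upper bound, and the paper's algebra then terminates in the clean absurdity $0\ge C\mu+(C-\mu)^2/4$.

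Without that step your two bounds degenerate to an equality precisely at the threshold. Concretely, with $R=4\delta/(C-\mu)$ your chain gives $(C-\mu)R\le 2\delta+R^2/(2\rho)$, i.e.\ $R^2\ge 4\delta\rho$, i.e.\ $C-\mu\le 2\sqrt{\delta/d_K(x)}$, which contradicts the hypothesis only when $C>\mu+2\sqrt{\delta/d_K(x)}$ \emph{strictly}; at equality (which the proposition's ``$\ge$'' permits, and $\mu=0$ is exactly the case fed into Lemma~\ref{lem:local}) there is no contradiction. Your proposed repair --- that the slack $C'<C$ ``provides strictness'' --- goes the wrong way: using $C'<C$ only \emph{weakens} the lower bound $d_L(z^\star)\ge d_L(x)+C'R$, so in the limit $C'\to C$ you recover the non-strict inequality above and nothing more. (A sharper estimate such as $d_K(y)\le(\rho+\mu s)+\tfrac{(1-\mu^2)s^2}{2(\rho+\mu s)}$ rescues the boundary case when $\mu>0$, but still fails for $\mu=0$; only the offsetting trick closes the gap uniformly.) Aside from this, your variant of the Lyutov step --- applying the lemma with constant $C$ directly rather than extracting a uniform $\tilde\mu>C$ by compactness --- is fine and arguably cleaner, and your bookkeeping of the curvature region $K_{d_K(x)+4\delta/(C-\mu)}$ and of $|d_K-d_L|\le\delta$ is correct.
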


\begin{proof}
We want to show that there is some $y$ such that $\|\nabla_{L}(y) \| \leq C$ and $d_{L}(y)\geq d_{L}(x)$ and $d(x,y)\leq 4\delta/(C-\mu)$. Suppose not. Then there is some $\tilde{\mu}>C$ such that  $\|\nabla_{L}(y) \| \geq \tilde{\mu}$ whenever $d_{L}(y)\geq d_{L}(x)$ and $d(x,y)\leq 4\delta/(C-\mu)$.

If $C\geq \mu+ 2\sqrt{\delta/d_{K}(x)}$ then $d_K(x) - 4\delta/(C-\mu)^2\geq 0$ and hence we can construct 
$$K':=K_{d_K(x) - 4\delta/(C-\mu)^2} \text{ and }L':=L_{d_K(x) - 4\delta/(C-\mu)^2}.$$ By construction $d_H(K',L') \leq d_H(K,L) \leq \delta$ and $d_{K'}(x)=4\delta/(C-\mu)^2$. Using $f = d_{L'} - d_{L'}(x)$ and $r=4\delta/(C-\mu)$ in Lemma \ref{lem:Lyt} we know that there exists a point $y \in B(x, 4\delta/(C-\mu))$ such that $f(y) = C 4\delta/(C-\mu)$ which means $d_{L'}(y)= d_{L'}(x) +  C 4\delta/(C-\mu)$. Using $d_H(K',L')\leq\delta$ we can show that 
\begin{align*}
d_{K'}(y)& \geq d_{L'}(y)-\delta
= d_{L'}(x) + C 4\delta/(C-\mu) -\delta \geq d_{K'}(x) +  C 4\delta/(C-\mu) -2\delta.
\end{align*}
Since $d_{K'}(x)=4\delta/(C-\mu)^2$ we conclude that
\begin{align}\label{eq:stable_ineq1}
 d_{K'}(y) \geq \frac{4\delta + C 4\delta(C-\mu) - 2 \delta(C-\mu)^2}{(C-\mu)^2}.
 \end{align}

At the same time, Lemma \ref{lem:toponogov} implies that  $d_{K'}(y)^2 \leq  d_{K'}(x)^2 + d(x,y)^2 + 2 d(y,x)d_{K'}(x)\mu$
and hence
\begin{align}\label{eq:stable_ineq2}
d_{K'}(y)^2 \leq \frac{16 \delta^2  + 16 \delta^2 (C-\mu)^2 + 32 \delta^2 \mu(C-\mu)}{(C-\mu)^4}
\end{align}

By combining \eqref{eq:stable_ineq1} and \eqref{eq:stable_ineq2} and performing some algebraic manipulation we obtain  
%
\begin{align}\label{eq:man}
1 + (C-\mu)^2 + 2(C-\mu)\mu \geq (1 + C (C-\mu) - (C-\mu)^2/2)^2.
\end{align} However algebraic manipulation of \eqref{eq:man} implies
%
%
%
$ 0\geq   (C-\mu)^2/4  + C\mu$ which is a contradiction.
\end{proof}

%
%

One of the problems with working with the $\mu$-reach is that it is not stable under Hausdorff distance. Indeed by the creation of an arbitrarily small cusp we know that for any compact subset $K$, and any $\delta>0$, there exists some compact subset $L$ with $d_H(K, L)<\delta$ whose $\mu$-reach is zero for all $\mu>0$. However by only considering $\mu$-critical points in an annular regions we can have stability results.

\begin{cor}
Let $K, L$ be compact subsets of a manifold with non-negative sectional curvature such that $d_H(K,L)\leq \delta$. Suppose that there are no $C$-critical points for $d_{L}$ in the annular region $L_{[a, b]}$.  If $$C\geq\mu + 2\sqrt{\frac{\delta}{a+\delta}}$$
then there are no $\mu$-critical points for $d_{K}$ in the annular region $K_{[a+\delta, b-4\delta/(C-\mu) - \delta]}.$ 
\end{cor}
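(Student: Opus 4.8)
The plan is to derive this corollary directly from Proposition~\ref{prop:stable} by contraposition. Suppose, for contradiction, that there is a $\mu$-critical point $x$ of $d_K$ lying in the annular region $K_{[a+\delta,\, b - 4\delta/(C-\mu) - \delta]}$. In particular $d_K(x) \geq a + \delta$, so $\delta/d_K(x) \leq \delta/(a+\delta)$, and hence the hypothesis $C \geq \mu + 2\sqrt{\delta/(a+\delta)}$ implies $C \geq \mu + 2\sqrt{\delta/d_K(x)}$, which is exactly the quantitative condition needed to invoke Proposition~\ref{prop:stable}. The curvature hypothesis is satisfied because the whole manifold has non-negative curvature, so in particular $K_{d_K(x) + 4\delta/(C-\mu)}$ does.

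Applying Proposition~\ref{prop:stable} then yields a $C$-critical point $y$ of $d_L$ with $d_L(y) \geq d_L(x)$ and $d(x,y) \leq 4\delta/(C-\mu)$. The remaining work is purely bookkeeping on the radial coordinate: I need to check that $y$ lands inside the annulus $L_{[a,b]}$, which will contradict the assumption that $d_L$ has no $C$-critical points there. For the lower bound, from $d_H(K,L) \leq \delta$ and $d_K(x) \geq a + \delta$ we get $d_L(x) \geq d_K(x) - \delta \geq a$, and since $d_L(y) \geq d_L(x)$ this gives $d_L(y) \geq a$. For the upper bound, again using the Hausdorff bound and the triangle inequality,
\[
d_L(y) \leq d_K(y) + \delta \leq d_K(x) + d(x,y) + \delta \leq \left(b - \frac{4\delta}{C-\mu} - \delta\right) + \frac{4\delta}{C-\mu} + \delta = b.
\]
Thus $y \in L_{[a,b]}$ is a $C$-critical point of $d_L$, contradicting the hypothesis, and the corollary follows.

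There is essentially no hard step here; the only mild subtlety is making sure the two estimates on the radial range of the output annulus $K_{[a+\delta,\, b - 4\delta/(C-\mu) - \delta]}$ are chosen precisely so that both the lower and upper containments close up exactly, and that the inequality $d_K(x) \geq a+\delta$ is enough to activate the square-root hypothesis of Proposition~\ref{prop:stable} (it is, since the right-hand side of that hypothesis is decreasing in $d_K(x)$). One should also note the region is non-empty only when $a + \delta \leq b - 4\delta/(C-\mu) - \delta$, i.e. $b - a \geq 2\delta + 4\delta/(C-\mu)$; the statement is vacuously true otherwise, so no separate argument is needed.
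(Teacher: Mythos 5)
Your proposal is correct and follows essentially the same argument as the paper: contrapose, apply Proposition~\ref{prop:stable} to the hypothetical $\mu$-critical point, and use the Hausdorff bound plus $d(x,y)\leq 4\delta/(C-\mu)$ to place the resulting $C$-critical point inside $L_{[a,b]}$. Your explicit verification that $d_K(x)\geq a+\delta$ activates the square-root hypothesis is a detail the paper leaves implicit, but the route is identical.
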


\begin{proof}
If $x$ is a $\mu$-critical point with $d_K(x) \in [a+\delta, b-4\delta/(C-\mu) - \delta]$ then by Proposition \ref{prop:stable} there exists some $C$-critical point $y$ with $$d_L(y) \in  [d_L(x), d_L(x)+ 4\delta/(C-\mu)] \subset [d_K(x)-\delta, d_K(x)+ 4\delta/(C-\mu) + \delta] \subset
 [a,b]$$ which is a contradiction. 
\end{proof}

Analogous stability results should hold for the cases when $\kappa=1, -1.$ However, we will only later require the case when $\mu=0$ and so to significantly simplify calculations we restrict to this case.


\begin{prop}\label{prop:-1}
Let $K,L$ be compact subsets of $\M$ with $d_H(K,L)\leq\delta$.  Let $x$ be a critical point of $d_{K}$.
Suppose that the sectional curvature of $K_{2d_K(x)}$ is bounded from below by $\kappa=-1$. 

Then for all $C>0$ there exists a $C$-critical point $y$ of $d_{L}$ with $d_{L}(y)\geq d_{L}(x)$ and $d(x,y)\leq 4\delta/C$ whenever $$9\delta \leq 2\tanh (d_K(x))C^2.$$
\end{prop}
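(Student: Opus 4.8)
The plan is to mimic the proof of Proposition \ref{prop:stable} but using the hyperbolic ($\kappa=-1$) inequality from Lemma \ref{lem:toponogov} instead of the Euclidean one, and using that here $x$ is a genuine critical point so $\mu=0$. First I would argue by contradiction: suppose no such $y$ exists, so there is some $\tilde\mu>C$ with $\|\nabla_L(y)\|\geq\tilde\mu$ whenever $d_L(y)\geq d_L(x)$ and $d(x,y)\leq 4\delta/C$. Unlike the non-negative curvature case, I would \emph{not} shrink $K$ and $L$ by an offset, since the hyperbolic cosine rule does not interact with offsetting as cleanly; instead I would apply Lemma \ref{lem:Lyt} directly to $f=d_L-d_L(x)$ with $r=4\delta/C$, producing $y\in\overline{B(x,4\delta/C)}$ with $d_L(y)=d_L(x)+Cr=d_L(x)+4\delta$. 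The hypothesis $9\delta\leq 2\tanh(d_K(x))C^2$ should be exactly what forces $r=4\delta/C$ to be small enough relative to $d_K(x)$ that $\overline{B(x,r)}\subseteq K_{2d_K(x)}$, so the curvature bound of Lemma \ref{lem:toponogov} applies, and also small enough that $d_L(y)=d_L(x)+4\delta\leq d_K(x)+5\delta$ stays within the region where everything is controlled.

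Next I would extract the two opposing inequalities. From the choice of $y$ and $d_H(K,L)\leq\delta$: $d_K(y)\geq d_L(y)-\delta = d_L(x)+4\delta-\delta\geq d_K(x)+2\delta$, giving the lower bound $\cosh d_K(y)\geq\cosh(d_K(x)+2\delta)$. From Lemma \ref{lem:toponogov} with $\kappa=-1$ and $\mu=0$ (since $x$ is critical): $\cosh d_K(y)\leq\cosh d(y,x)\cosh d_K(x)\leq\cosh(4\delta/C)\cosh d_K(x)$. Combining,
\begin{equation*}
\cosh(d_K(x)+2\delta)\leq\cosh(4\delta/C)\cosh d_K(x).
\end{equation*}
Expanding the left side via the addition formula $\cosh(d_K(x)+2\delta)=\cosh d_K(x)\cosh 2\delta+\sinh d_K(x)\sinh 2\delta$, dividing through by $\cosh d_K(x)$, and using $\sinh d_K(x)/\cosh d_K(x)=\tanh d_K(x)$, this rearranges to
\begin{equation*}
\cosh 2\delta + \tanh(d_K(x))\sinh 2\delta \leq \cosh(4\delta/C).
\end{equation*}
The final step is to show this is impossible under the hypothesis $9\delta\leq 2\tanh(d_K(x))C^2$. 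Here I would use elementary bounds on $\cosh$ and $\sinh$: on the left, $\cosh 2\delta\geq 1$ and $\sinh 2\delta\geq 2\delta$, so the left side is at least $1+2\delta\tanh d_K(x)$; on the right, $\cosh t\leq 1+t^2$ for $t$ in a bounded range (or $\cosh t\leq 1+t^2\cosh t_0$ type bound, choosing constants so the numbers work out to the stated $9/2$ rather than a cleaner constant), so $\cosh(4\delta/C)\leq 1+ (9/2)(\delta/C)^2\cdot(\text{something})$; actually the cleanest route is $\cosh(4\delta/C)\le 1+ \tfrac{1}{2}(4\delta/C)^2 e^{4\delta/C}$ or simply to bound $(4\delta/C)^2$ crudely and track the constant. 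Requiring $2\delta\tanh(d_K(x))\geq (\text{the }\cosh\text{-bound remainder})$ then reduces precisely to an inequality of the form $\tanh(d_K(x))C^2\geq c\,\delta$, and the constant $c=9/2$ is what appears in the statement.

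The main obstacle is getting the constant exactly right — the paper commits to the clean bound $9\delta\leq 2\tanh(d_K(x))C^2$, which means I have to choose the right elementary estimates for $\cosh$ and $\sinh$ so that the accumulated slack collects into exactly $9/2$ and not some messier number; this likely forces an a priori smallness assumption on $\delta/C$ (so that, e.g., $\cosh(4\delta/C)\le 1+ 9(\delta/C)^2$), which should be a consequence of the very hypothesis being proved (since $\tanh\le 1$ forces $C^2\ge \tfrac{9}{2}\delta$, hence $\delta/C\le\tfrac{2}{9}C$, and one bootstraps). A secondary technical point is verifying the domain conditions — that $\overline{B(x,4\delta/C)}$ lies inside $K_{2d_K(x)}$ where the curvature hypothesis is assumed — which again follows from the hypothesis but should be stated explicitly so Lemma \ref{lem:toponogov} is legitimately applicable.
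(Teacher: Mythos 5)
Your proposal is correct and follows essentially the same route as the paper: contradiction via Lemma \ref{lem:Lyt} with $f=d_L-d_L(x)$ and $r=4\delta/C$, the $\kappa=-1$, $\mu=0$ case of Lemma \ref{lem:toponogov}, the $\cosh$ addition formula reduced to $\cosh(4\delta/C)-\cosh(2\delta)\geq\tanh(d_K(x))\sinh(2\delta)$, and then elementary bounds (the paper uses $\cosh t-1<\tfrac{9}{16}t^2$ on $(0,1)$ and $\sinh t>t$, with the hypothesis itself guaranteeing $4\delta/C<1$ exactly as you anticipated). The only cosmetic difference is that the paper opens with ``Let $C\in(0,1)$'' rather than handling all $C>0$, the case $C\geq1$ being trivial.
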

\begin{proof}
Let $C\in(0,1)$. We want to show that there is some point $y$ such that $\|\nabla_L (y)\| \leq C$ and $d_{L}(y)\geq d_{L}(x)$ and $d(x,y)\leq 4\delta/C$. Suppose not. Then there is some $\tilde{\mu}>C$ such that $\|\nabla_L (y)\| \geq \tilde{\mu}$ whenever $d_{L}(y)\geq d_{L}(x)$ and $d(x,y)\leq 4\delta/C$

Using $f = d_L - d_L(x)$ and $r=4\delta/C$ in Lemma \ref{lem:Lyt} we know that there exists a point $y \in B(x,4\delta/C)$ such that $f(y)=4\delta$; i.e. $d_L(y)= d_L(x) + 4\delta$. From $d_H(K,L)\leq\delta$ we know that $ d_K(y) \geq d_K(x) + 2\delta.$

At the same time, Lemma \ref{lem:toponogov} implies that $\cosh d_K(y) \leq \cosh d(y,x) \cosh d_K(x)$ and hence $\cosh(d_K(x) +  2 \delta) \leq \cosh(4\delta/C)\cosh d_K(x).$ 

Using the hyperbolic cosh sum formula and dividing through by $\cosh d_K(x)$, this can be rewritten as  
$$\cosh (4\delta/C) - \cosh( 2\delta) \geq  \tanh(d_K(x))\sinh(2\delta).$$
Our assumption that  $9\delta \leq 2\tanh (d_K(x))C^2$ implies that $4\delta/C\leq 8\tanh(d_K(x))C/9<1$. Now $$\cosh(t)-\cosh(2\delta)<\cosh(t)-1<\frac{9}{16}t^2$$ whenever $t\in (0,1)$ and $\sinh(t)>t$ for all $t$. This means that we can conclude that
$$\frac{9}{16}\left(\frac{4\delta}{C}\right)^2> \tanh(d_K(x))2\delta$$ and hence that $9\delta>2\tanh(d_K(x))C^2$. This contradicts our assumption of $\delta$ implying that there must exist a suitably nearby $C$-critical point. 
\end{proof}
 
\section{Reconstruction theorem}

Our reconstruction proof will involve finding sufficient conditions for the existence of useful cone fields. We will use the stability of $\mu$-critical points to show the existence of acute minimal spanning cones of $A_\delta$ from points in an annular region.

\begin{lem}\label{lem:local}
Let $\mu \in (0,1)$, $r>\delta>0$ and $\M$ be a manifold with nowhere negative curvature. Let $K\subset \M$ be a compact subset and $x \in K_{r+\delta}$. If there are no $\mu$-critical points of $d_K$ in $K_{ [d_K(x), d_K(x)+ 2(r- d_K(x) + \delta)/\mu]}$ and  
$$\delta \leq d_K(x) - \frac{4-\mu^2}{4+\mu^2}r$$
then there is an acute $r$-spanning cone  for $K_\delta$ from $x$.

\end{lem}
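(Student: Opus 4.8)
Set $\rho_0:=d_K(x)$. Since $x\in K_{r+\delta}$ we have $\rho_0\le r+\delta$, and the hypothesis $\delta\le \rho_0-\tfrac{4-\mu^2}{4+\mu^2}r$ forces $\rho_0>\delta$, so $x\notin K_\delta$, $d_{K_\delta}(x)=\rho_0-\delta$, and $\nabla_{K_\delta}(x)=\nabla_K(x)$. As $x$ itself lies in the annular region $K_{[\rho_0,\,\rho_0+2(r-\rho_0+\delta)/\mu]}$ (nonempty because $\rho_0\le r+\delta$), it is not $\mu$-critical, so $\|\nabla_K(x)\|>\mu>0$. Let $C(w_x,\beta_0)$ be the minimal spanning cone for the closest points of $K_\delta$ to $x$, of length $\rho_0-\delta$; then $w_x=-\nabla_K(x)/\|\nabla_K(x)\|$, $\cos\beta_0=\|\nabla_K(x)\|>\mu$, and $\beta_0<\arccos\mu<\pi/2$. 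The first move is to reduce the lemma to the pointwise claim that \emph{for every $p\in K_\delta\cap\overline{B(x,r)}$ and every minimizing geodesic $\gamma$ from $x$ to $p$, the initial direction $\gamma'(0)$ makes an angle strictly less than $\pi/2$ with $w_x$}. Granting this, the set $D$ of all such initial directions is a closed (hence compact) subset of the unit tangent sphere at $x$ — a limit of minimizing geodesics from $x$ to points of the compact set $K_\delta\cap\overline{B(x,r)}$ is again such a geodesic — so $\beta:=\sup_{v\in D}\angle(v,w_x)<\pi/2$, and $C(w_x,\beta)$ is an acute $r$-spanning cone for $K_\delta$: every $p\in K_\delta\cap\overline{B(x,r)}$ is reached from $x$ along a geodesic of length $d(x,p)\le r$ whose direction lies in $C(w_x,\beta)$.

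For the pointwise claim I would argue by contradiction. Suppose some $p\in K_\delta\cap\overline{B(x,r)}$ admits a minimizing geodesic $\gamma\colon[0,\rho]\to\M$ with $\rho=d(x,p)\in[\rho_0-\delta,r]$ and $v:=\gamma'(0)$ satisfying $\angle(v,w_x)\ge\pi/2$, equivalently $\langle v,\nabla_K(x)\rangle\ge0$ — the geodesic to $p$ leaves $x$ into the half-space ``opposite the gradient''. Fix a closest point $q\in K$ of $p$, so $d(p,q)\le\delta$ and $d(x,q)\le\rho+\delta\le r+\delta$. Such a $p$ expresses that $K$ wraps around $x$: its points closest to $x$ all point into $C(w_x,\beta_0)$, while $q$ lies, up to $\delta$, on the opposite side. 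I would make this quantitative with Euclidean triangle comparison, available because $\M$ has nowhere negative curvature (Toponogov, exactly as in the proof of Lemma~\ref{lem:toponogov}): for any triangle with a vertex at $x$, the side opposite the angle $\theta$ at $x$ has length squared at most $a^2+b^2-2ab\cos\theta$. Applying this to the triangles $x,z,q$ and $x,z,q_0$, where $q_0$ is a closest point of $K$ to $x$ and $z$ ranges over $\overline{B\big(x,\tfrac2\mu(r-\rho_0+\delta)\big)}$, and feeding in $d(p,q)\le\delta$, $d(x,p)\le r$ and the angle data $\angle(v,w_x)\ge\pi/2$, $\widehat{\Gamma_K(x)}\subseteq C(w_x,\beta_0)$, bounds $d_K(z)$ from above on that ball. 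On the other hand, if $d_K$ had no $\mu$-critical point at a level in $[\rho_0,\,\rho_0+\tfrac2\mu(r-\rho_0+\delta)]$, then $\|\nabla_K\|>\mu$ at every point of the open ball $B\big(x,\tfrac2\mu(r-\rho_0+\delta)\big)$ with $d_K\ge\rho_0$ (such a point automatically has $d_K<\rho_0+\tfrac2\mu(r-\rho_0+\delta)$), and Lemma~\ref{lem:Lyt} applied to $f=d_K-\rho_0$ on that ball forces $d_K$ to attain values arbitrarily close to $\rho_0+2(r-\rho_0+\delta)$ on it. Comparing this lower bound with the comparison-geometry upper bound, the two remain compatible exactly up to $\delta=\rho_0-\tfrac{4-\mu^2}{4+\mu^2}r$, so under the strict hypothesis they are contradictory — which forces, contrary to assumption, a $\mu$-critical point of $d_K$ in the forbidden annulus. (The algebra collapsing the two bounds to this threshold is of the same quadratic type as the reconciliation in the proof of Proposition~\ref{prop:stable}.) Hence no such $p$ exists, and the reduction of the first paragraph produces the cone.

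The crux, and the main obstacle, is the estimate just described: arranging the comparison triangles and the center and radius for Lemma~\ref{lem:Lyt} so that nonnegative curvature enters in the favourable direction, and then verifying that the opposing upper and lower bounds on $d_K$ near $x$ collapse to precisely $\delta\le d_K(x)-\tfrac{4-\mu^2}{4+\mu^2}r$ with annular width exactly $\tfrac2\mu(r-d_K(x)+\delta)$; obtaining these sharp constants rather than weaker ones is where the hypothesis is fully used. By contrast the reduction in the first paragraph is routine, resting only on the observation that membership in an acute cone is governed by the minimal enclosing spherical ball of a compact set of unit directions.
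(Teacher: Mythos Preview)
Your reduction in the first paragraph is to a claim that is not true in general under the lemma's hypotheses. You assert that every initial direction from $x$ to a point of $K_\delta\cap\overline{B(x,r)}$ makes an angle strictly less than $\pi/2$ with the gradient direction $w_x$, and then aim to derive a contradiction from the existence of a single direction at angle $\ge\pi/2$. But the acute $r$-spanning cone promised by the lemma need not be centred at $w_x$. In $\mathbb{R}^2$ take $\mu=\tfrac12$, $r=17$, $\delta=1$, $x=0$, and $K=\{q_0,q\}$ with $q_0=(16,0)$, $q=(0,16.5)$. Then $d_K(x)=16$, $w_x=e_1$, and the only $\mu$-critical points of $d_K$ lie on the perpendicular bisector of $q_0q$ at level at most $|q_0-q|/(2\sqrt{1-\mu^2})\approx 13.3<16$, so the annular hypothesis is satisfied; and $\delta\le d_K(x)-\tfrac{4-\mu^2}{4+\mu^2}r$ reads $1\le 16-15$. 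Yet $p=(-\tfrac12,16.5)\in B(q,\delta)\subseteq K_\delta$ has $d(x,p)\approx 16.5<r$ and makes an angle strictly greater than $\pi/2$ with $w_x$. The lemma's conclusion does hold here---the directions to $K_\delta\cap\overline{B(x,r)}$ span roughly a quarter-circle and sit in an acute cone centred near $(e_1+e_2)/\sqrt{2}$---but not the stronger pointwise claim you reduce to.

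The paper's argument negates the correct statement: it supposes, for contradiction, that \emph{no} acute $r$-spanning cone exists, i.e.\ that $0$ lies in the convex hull of $\hat K:=\exp_x^{-1}(K_\delta)\cap\overline{B(0,r)}$. This hypothesis is much stronger than your single bad direction, and it is exactly what is needed: the paper radially projects $\hat K$ onto the sphere of radius $\tfrac12(r+d_K(x)-\delta)$ in $T_x\M$, pushes forward by $\exp_x$, and patches in the far part of $K$ to build an auxiliary compact set $L$ with $d_H(K,L)\le\tfrac12(r-d_K(x)+\delta)$ for which $x$ is a genuine $0$-critical point of $d_L$. Proposition~\ref{prop:stable} (with that Hausdorff bound in the role of its $\delta$, and $\mu$ in the role of $C$) then produces a $\mu$-critical point of $d_K$ in the forbidden annulus; the inequality $\delta\le d_K(x)-\tfrac{4-\mu^2}{4+\mu^2}r$ is precisely what makes the hypothesis of Proposition~\ref{prop:stable} hold. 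Your triangle-comparison sketch, which tries to bound $d_K(z)$ from above on a ball using only the two points $q$ and $q_0$, cannot replace this construction: in dimension $\ge 3$ a point $z$ can lie at angle greater than $\pi/2$ from both the direction to $q_0$ and the direction to $q$, so neither comparison triangle gives a usable bound; and even where it does, the algebra you defer would not recover the stated threshold from a single bad direction. The missing idea is the radial projection that converts ``$0$ in the convex hull'' into ``$x$ is $0$-critical for an auxiliary set close to $K$''.
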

\begin{proof}
Suppose, by way of contradiction, that $0$ is in the convex hull of $(\exp_x^{-1} K_\delta) \cap \overline{B(0, r)}$. 
 
Set $\hat{K}:=(\exp_x^{-1}K_\delta) \cap \overline{B(0,r)}$.
Define the map 
\begin{align*}
\varphi: \overline{B(0,r)} &\to \partial B\left(0,\frac{r+(d_K(x)-\delta)}{2} \right)\\
z &\mapsto  \frac{r+(d_K(x)-\delta)}{2}\frac{z}{\|z\|}.
\end{align*}
and set $\hat{L}$ to be $\varphi(\hat{K})$. This construction is illustrated in Figure $2$. By construction $d_{\hat{L}}(x) = \frac{1}{2}(r+(d_K(x)-\delta))$ and $$d_H(\exp_x\hat L,\exp_x\hat K) \leq  \frac{1}{2}(r-(d_K(x)-\delta)).$$

\begin{figure}
\begin{center}
\def\svgwidth{10cm}
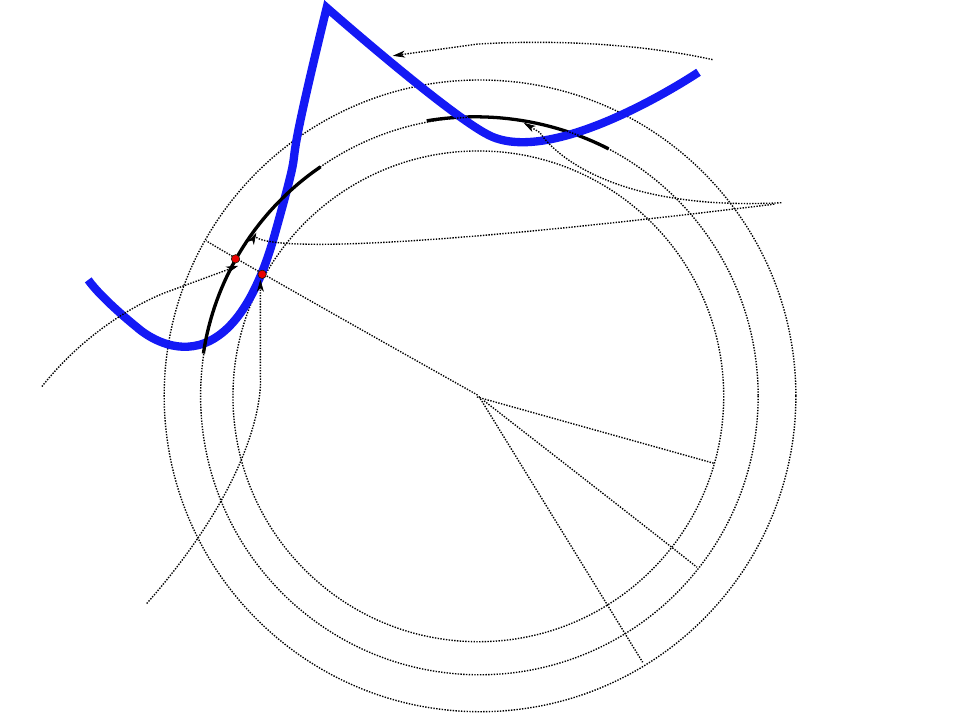

\caption{Construction of $\hat{L} = \varphi(\hat K)$}
\end{center}
\end{figure}

Set $L= \exp_x \hat{L} \cup\{y\in K: d(y,x)\geq r\}$. By construction $K = \exp_x \hat{K}  \cup\{y\in K: d(y,x)\geq r\}$. Since taking the union with both sets by the same set can only decrease the Hausdorff distance,  $d_H(K, L)\leq d_H(\exp_x \hat K,\exp_x\hat L)$. Also, by construction, $d_L(x) = d_{\hat{L}}(x) =  \frac{1}{2}(r+(d_K(x)-\delta))$.

Since $0$ is in the convex hull of $\hat{K}$ (by assumption), there exists $z_1, \ldots, z_m\in \hat{K}$ and $a_1, \ldots a_m>0$ such that $\Sigma_{i=1}^m a_i z_i =0$. However this gives
$$\sum_{i=1}^m a_i \frac{2\|z_i\|}{r +d_K(x)-\delta}\varphi(z_i)=0$$ and hence $0$ is in the convex hull of $\hat L$. Since all the points in $\hat L$ are equidistant from $0$, and hence all the points in $\exp_x \hat L$  are equidistant to $x$, we conclude that $x$ is a $0$-critical point of $d_{\exp_x\hat L}$. Adding points further away from $x$ does not affect this criticality and so $x$ is a $0$-critical point of $d_{L}$.

Our condition that $\delta \leq d_K(x) - r(4-\mu^2)/(4+\mu^2)$  can be rewritten as 
$$\frac{1}{2}\left(r - (d_K(x)-\delta)\right) < \frac{1}{2}\left(r +(d_K(x)-\delta)\right)\frac{\mu^2}{4}.$$
This implies that $d_H(K,L) < d_L(x) \mu^2/4$, or in other words $\mu\geq 0+2\sqrt{d_H(K,L)/d_L(x)}$. This enables us to apply Proposition \ref{prop:stable} to say that there exists a $\mu$-critical point $y$ of $d_K$ with 
$$d_K(y) \in [d_K(x), d_K(x)+ 4d_H(K,L)/\mu] =  [d_K(x), d_K(x)+ 2(r - d_A(x) + \delta)/\mu]  .$$ 
This contradicts our assumption about the absence of $\mu$-critical points in that annular region. 

\end{proof}

The process can be applied for the case when $\kappa=-1$ using Proposition \ref{prop:-1} instead of Proposition \ref{prop:stable}. This leads to the following lemma. 

\begin{lem}\label{lem:kappa-1}
Let $\mu \in (0,1)$, $r>\delta>0$ and $\M$ be a manifold with curvature bounded from below by $\kappa=-1$. Let $K\subset \M$ be a compact subset and $x \in K_{r+\delta}$. If there are no $\mu$-critical points of $d_K$ in $K_{ [d_K(x), d_K(x)+4\delta/\mu]}$ and  
$$9(r+ \delta-d_K(x))\leq4\tanh\left(\frac12(r-\delta+d_K(x)\right)\mu^2$$
there is an acute $r$-spanning cone  for $K_\delta$ from $x$.
\end{lem}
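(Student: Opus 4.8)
The plan is to mimic the proof of Lemma \ref{lem:local} almost verbatim, substituting the hyperbolic critical-point stability result (Proposition \ref{prop:-1}) for the non-negatively curved one (Proposition \ref{prop:stable}). Suppose for contradiction that there is no acute $r$-spanning cone for $K_\delta$ from $x$; by the convexity characterization of critical points this means $0$ lies in the convex hull of $\hat K := (\exp_x^{-1} K_\delta) \cap \overline{B(0,r)}$. As before, I would radially project $\hat K$ onto the sphere of radius $\tfrac12(r + (d_K(x) - \delta))$ via the map $\varphi$, set $\hat L := \varphi(\hat K)$, and form $L := \exp_x \hat L \cup \{y \in K : d(y,x) \ge r\}$. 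The same two observations carry over with no change: convexity is preserved by the positive rescaling implicit in $\varphi$, so $x$ is a $0$-critical point of $d_L$ with $d_L(x) = \tfrac12(r + d_K(x) - \delta)$; and taking unions with a common set only shrinks Hausdorff distance, so $d_H(K,L) \le \tfrac12(r - (d_K(x) - \delta)) =: \delta'$.

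Now I apply Proposition \ref{prop:-1} with the roles $K \leftrightarrow L$ reversed to match its statement: $x$ is a critical ($0$-critical) point of $d_L$, the curvature hypothesis on $L_{2 d_L(x)}$ follows from the ambient bound $\kappa = -1$ together with $d_L(x) \le r$ (so that $L_{2d_L(x)}$ sits inside a neighborhood of $K$ where curvature is controlled — here I would want $2d_L(x) \le 2d_K(x)$ or an analogous containment, which I should check holds given $d_L(x) = \tfrac12(r - \delta + d_K(x))$ and $x \in K_{r+\delta}$). The hypothesis $9\delta' \le 2\tanh(d_L(x)) \mu^2$ needed to invoke Proposition \ref{prop:-1} with $C = \mu$ is, after substituting $\delta' = \tfrac12(r - d_K(x) + \delta)$ and $d_L(x) = \tfrac12(r - \delta + d_K(x))$, exactly the displayed inequality $9(r + \delta - d_K(x)) \le 4\tanh\!\left(\tfrac12(r - \delta + d_K(x))\right)\mu^2$. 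The proposition then yields a $\mu$-critical point $y$ of $d_K$ with $d_K(y) \ge d_L(x) - \text{(something)}$ and $d(x,y) \le 4\delta'/\mu$; chasing the bounds gives $d_K(y) \in [d_K(x), d_K(x) + 4\delta'/\mu]$, and since $4\delta'/\mu = 2(r - d_K(x) + \delta)/\mu \le 4\delta/\mu$ (using $d_K(x) \ge r - \delta$, which holds because $x \in K_{r+\delta}$... or rather needs to be checked — if $d_K(x) < r - \delta$ then $x \in K_{r-\delta}$ already and the conclusion may be handled separately or the cone taken trivially), this $y$ lies in $K_{[d_K(x), d_K(x) + 4\delta/\mu]}$, contradicting the no-$\mu$-critical-points hypothesis.

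The main obstacle I anticipate is bookkeeping rather than conceptual: making sure the interval of critical values produced by Proposition \ref{prop:-1} is genuinely contained in $K_{[d_K(x), d_K(x) + 4\delta/\mu]}$, which hinges on the elementary inequality $2(r - d_K(x) + \delta) \le 4\delta$, i.e. $r - d_K(x) \le \delta$, i.e. $x \in K_{r-\delta}$ is false and $x \in K_{[r-\delta, r+\delta]}$ — so one must either restrict attention to $x$ with $d_K(x) \ge r - \delta$ (the case $d_K(x) < r-\delta$ being trivial since then $B(x,r) \supseteq K_\delta \cap B(x,r)$ can be spanned by the full half-space cone, which is borderline non-acute and needs a moment's care, or handled by noting $x$ is already well inside) or absorb it into the statement's hypotheses. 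I would also double-check the curvature-neighborhood hypothesis of Proposition \ref{prop:-1}: it requires the sectional curvature of $L_{2d_L(x)}$ (in the reversed-roles application, $K_{2d_L(x)}$) bounded below by $-1$, and since $d_L(x) = \tfrac12(r - \delta + d_K(x)) < d_K(x)$ when $d_K(x) > r - \delta$, we get $2d_L(x) < 2d_K(x)$, so the ambient bound suffices provided it holds on $K_{2d_K(x)}$ — which is implied by "$\M$ has curvature bounded below by $-1$" everywhere. Everything else is a direct transcription of the proof of Lemma \ref{lem:local}.
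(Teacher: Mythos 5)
Your proposal is correct and is exactly the argument the paper intends: the paper gives no separate proof of this lemma, saying only that one repeats the proof of Lemma \ref{lem:local} with Proposition \ref{prop:-1} in place of Proposition \ref{prop:stable}, and your arithmetic confirming that the displayed hypothesis is precisely $9\delta' \le 2\tanh(d_L(x))\mu^2$ with $\delta' = \tfrac12(r-d_K(x)+\delta)$ and $d_L(x)=\tfrac12(r-\delta+d_K(x))$ is right. The bookkeeping caveat you flag is genuine: the produced $\mu$-critical point lies in $[d_K(x),\, d_K(x)+4\delta'/\mu]$, which sits inside the hypothesized forbidden annulus $[d_K(x),\, d_K(x)+4\delta/\mu]$ only when $d_K(x)\ge r-\delta$; this is not literally among the stated hypotheses, but it holds everywhere the lemma is actually invoked (over $K_{[r-\delta,r+\delta]}$ in Theorem \ref{thm:bigkappa}), so it is best absorbed into the statement as you suggest.
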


%
%
%

\begin{thm}\label{thm:big}
Let $\mu\in (0,1)$, $r>0$. Let $\M$ be a smooth manifold with nowhere negative curvature such that every point has an injectivity radius greater than $r$.
Let $K,L\subset \M$ be compact with $d_H(K,L) <\delta$. Suppose that there are no $\mu$-critical points in $K_{[r-\delta, r -\delta + 2\delta/\mu]}$ and $(4+\mu^2)\delta < \mu^2 r$. Then $L_r$ deformation retracts to $K_{r-\delta}$.
\end{thm}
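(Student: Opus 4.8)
The plan is to assemble Theorem~\ref{thm:big} from the machinery already developed: reduce the deformation-retraction claim to the existence of a suitable acute $r$-spanning cone field via Lemma~\ref{lem:retract}, then verify the hypotheses of that cone field pointwise using the local existence result Lemma~\ref{lem:local}, and finally upgrade the pointwise existence to an upper semicontinuous cone field so that Lemma~\ref{lem:upperlower} and Proposition~\ref{prop:lowersemicont} produce the smooth subordinate vector field needed to drive the flow.

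First I would note that Lemma~\ref{lem:retract} says exactly what we want provided we can produce an acute $r$-spanning cone field $W$ over the annulus $K_{[r-\delta,r+\delta]}$ for $K_\delta$, together with a smooth vector field strictly subordinate to the complementary field $W'$. So the entire task becomes: (a) show the minimal acute $r$-spanning cone for $K_\delta$ exists at every point $x \in K_{[r-\delta,r+\delta]}$, (b) show this pointwise-defined minimal cone field is upper semicontinuous, and (c) conclude. For step (a) I would apply Lemma~\ref{lem:local} with the given $K$ and any $x$ with $d_K(x) \in [r-\delta, r+\delta]$: the hypothesis $(4+\mu^2)\delta < \mu^2 r$ is precisely arranged so that $\delta \le d_K(x) - \frac{4-\mu^2}{4+\mu^2} r$ holds for all such $x$ (the worst case is $d_K(x) = r-\delta$, where the inequality rearranges to $(4+\mu^2)\delta \le \mu^2 r$), and the no-$\mu$-critical-points hypothesis over $K_{[r-\delta,\, r-\delta+2\delta/\mu]}$ covers the required annulus $K_{[d_K(x),\, d_K(x)+2(r-d_K(x)+\delta)/\mu]}$ since $d_K(x) \ge r-\delta$ forces $d_K(x)+2(r-d_K(x)+\delta)/\mu \le r-\delta+2\delta/\mu$ when $d_K(x) \le r+\delta$. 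Thus an acute $r$-spanning cone exists at each point of the annulus, and by the uniqueness remark after the definition of minimal spanning cones, the minimal one is well-defined, giving a cone field $W$ on $K_{[r-\delta,r+\delta]}$.

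The main obstacle is step (b): showing the minimal acute $r$-spanning cone field is upper semicontinuous. The idea is that the set $K_\delta \cap \overline{B(x,r)}$, pulled back to $T_x\M$ by $\exp_x^{-1}$ (which is a diffeomorphism on $B(x,r)$ by the injectivity-radius hypothesis), varies ``upper semicontinuously'' in $x$ in the sense that nearby base points see, under parallel transport, a set contained in a slight enlargement; the minimal cone covering a larger set has a larger angle, so the minimal angle $\beta_x$ is upper semicontinuous and the axis $w_x$ varies appropriately. Making this precise requires care about how $K_\delta \cap \overline{B(x,r)}$ can jump when points of $K$ sit near the sphere $\partial B(x,r)$ — but such points are at distance exactly $r$ from $x$, hence at distance $\ge r-\delta > 0$ from $K$ after accounting... actually one must argue that a point of $K$ entering or leaving $\overline{B(x,r)}$ as $x$ moves does so from the boundary, where it contributes a direction already on the boundary of any spanning cone, so it does not decrease the minimal angle discontinuously. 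I would formalize this via a compactness/limit argument: if upper semicontinuity failed at some $x$, take a sequence $x_i \to x$ with $\beta_{x_i}$ bounded below by $\beta_x + \epsilon$ along parallel transport, extract a limiting cone, and derive that it is an $r$-spanning cone at $x$ with angle $< \beta_x$ (using that geodesics and parallel transport depend smoothly on endpoints within the injectivity radius), contradicting minimality.

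Finally, for step (c): given that $W$ is acute and upper semicontinuous, Lemma~\ref{lem:upperlower} gives that its complementary cone field $W'$ is lower semicontinuous, and since $W'$ has $\pi/2 - \beta_x > 0$ everywhere (acuteness of $W$), Proposition~\ref{prop:lowersemicont} yields a smooth unit vector field strictly subordinate to $W'$. Then Lemma~\ref{lem:retract} — whose hypotheses $0 < \delta < r/2$ follow from $(4+\mu^2)\delta < \mu^2 r$ since $\mu \in (0,1)$ forces $4\delta < (4+\mu^2)\delta < \mu^2 r < r$, i.e. $\delta < r/4 < r/2$, and $d_H(K,L) < \delta \le \delta$ — delivers the deformation retraction of $L_r$ onto $K_{r-\delta}$, completing the proof. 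I would present the argument in that order, spending the bulk of the write-up on the semicontinuity claim in step (b) and treating the inequality bookkeeping in steps (a) and (c) as routine.
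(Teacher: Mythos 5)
Your overall architecture is exactly the paper's: apply Lemma \ref{lem:local} pointwise over $K_{[r-\delta,r+\delta]}$ to get an acute $r$-spanning cone field for $K_\delta$, prove that field is upper semicontinuous, then chain Lemma \ref{lem:upperlower}, Proposition \ref{prop:lowersemicont} and Lemma \ref{lem:retract}. Two of your steps, however, do not hold up as written.

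First, the ``routine bookkeeping'' in step (a) contains a false inequality. The function $s\mapsto s+2(r-s+\delta)/\mu$ is decreasing in $s$ (since $2/\mu>1$), so over $s=d_K(x)\in[r-\delta,r+\delta]$ its maximum is attained at the \emph{inner} edge $s=r-\delta$, where it equals $r-\delta+4\delta/\mu$, not $r-\delta+2\delta/\mu$. Hence the union of the annuli that Lemma \ref{lem:local} requires to be free of $\mu$-critical points is $K_{[r-\delta,\,r-\delta+4\delta/\mu]}$, strictly larger than the region named in the theorem statement, and your claimed containment fails precisely in the worst case rather than being tight there. (This does expose an inconsistency in the paper itself: its own proof and the bound $\frac{b\mu-r\mu}{4-\mu}$ in Corollary \ref{cor:big} both correspond to the $4\delta/\mu$ annulus. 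You need either to strengthen the hypothesis to $K_{[r-\delta,\,r-\delta+4\delta/\mu]}$ or to explain why $2\delta/\mu$ suffices; your one-line verification does neither.)

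Second, step (b) is where essentially all the work of the paper's proof lies, and your sketch does not yet contain the needed idea. Failure of upper semicontinuity at $x$ means there are $x_i\to x$ with $C(w_{x_i},\beta_{x_i})\not\subseteq\Gamma(\gamma)^{t}_0(C(w_x,\beta_x+\epsilon))$; this can occur because the axis $w_{x_i}$ drifts, not only because $\beta_{x_i}$ stays large, and the limit of the cones at $x_i$ is not obviously an $r$-spanning cone at $x$ of angle less than $\beta_x$, so the contradiction you aim for is not correctly set up. What must actually be shown is that for $y$ near $x$ the whole set $K_\delta\cap\overline{B(y,r)}$ lies in the image under $\exp_y$ of the transported, slightly enlarged cone of length $r$; the delicate points are those of $K_\delta$ with $d(x,\cdot)\in(r,\,r+d(x,y)]$. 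The paper handles this by first proving $K_\delta\cap\overline{B(x,r+\tilde r)}\subseteq\bigl(K_\delta\cap\overline{B(x,r)}\bigr)_{\tilde r}$ for $\tilde r$ small (via the auxiliary sets $A_n$), and then transferring the resulting cone containment to nearby base points using continuity of $(y,u)\mapsto\exp_y\Gamma_x^y(u)$ on compact sets. Your remark that a point entering the ball ``contributes a direction already on the boundary of any spanning cone'' is not a substitute: a point of $K_\delta$ entering $\overline{B(y,r)}$ from just outside $\overline{B(x,r)}$ need not lie near any direction of $C(w_x,\beta_x)$ unless something like the offset containment above is established first. Until that ingredient is supplied, step (b) restates the difficulty rather than resolving it.
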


\begin{proof}
Suppose that there are no $\mu$ critical point of $d_K$ in $K_{[r-\delta, r+\delta + 2\delta/\mu]}$. For each $x\in K[r-\delta, r+\delta]$ we have $d_K(x) \in [r-\delta, r+\delta]$ and hence 
$$[d_K(x), d_K(x) + 2(r-d_K(x)+\delta)/\mu] \subseteq [r-\delta, r+\delta+ 2\delta/\mu].$$
Lemma \ref{lem:local} tells us that there exists an acute $r$-spanning cone field $W = (x, C(w_x, \beta_x))$ over $K_{[r-\delta, r+\delta]}$ for $K_\delta$. 

By Lemma \ref{lem:retract}  it is sufficient to show that  there exists a smooth vector field strictly subordinate to the complementary cone $W'$ of $W$. Using Lemma \ref{lem:upperlower} and Proposition \ref{prop:lowersemicont} the theorem will follow if we can show $W$ is upper semicontinuous.

Let $x\in K_{[r-\delta, r+\delta]}$ and $\epsilon >0$. Since $C(w_x, \beta_x)$ is the minimal spanning cone for $K_\delta \cap \overline{ B(x,r)}$ of length $r$ and $d_{K_\delta}(x)\geq r-2\delta$ we have 
$$K_\delta \cap  \overline{B(x, r)}\subseteq \{\exp_x(tv): v\in C(w_x, \beta_x), t\in [r-2\delta,r]\}.$$  
This implies that there exists an $\alpha_0>0$ such that for all $\alpha<\alpha_0$ we have 
\begin{align}\label{eq:offseteq1}
\left(K_\delta\cap \overline{B(x,r)}\right)_\alpha\subseteq \{\exp_x(tv): v\in C(w_x, \beta_x+\epsilon/2), t\in [r-\delta-\alpha, r+\alpha]\}.
\end{align}

Define the sequence of compact sets $A_n, n \in \mathbb{N}, n\geq 1$ as follows. Let $y\in A_n$ if and only if $y\in K_\delta \cap \overline{B(x,1/n)}$ and there does not exist a path $\gamma:[0,1] \to K_\delta$ with $\gamma(0)=y$, $\gamma(1)\in K_\delta \cap \overline{B(x,r)}$ such that $d_{K_\delta \cap \overline{B(x,r)}}(\gamma(t))$ is strictly decreasing. The $A_n$ are compact because $K_{\delta}$ is closed. The $A_n$ are decreasing for the inclusion and $\cap_n A_n=\emptyset$. This implies that for some $n$, $A_n=\emptyset.$ Set $\tilde{r}:=\min\{1/n,\alpha\}$.

For this $\tilde{r}$ (using equation \eqref{eq:offseteq1} for the second inclusion) we have 
\begin{align}\label{eq:offseteq2}
K_\delta\cap \overline{B(x,r+\tilde{r})}& \subseteq \left( K_\delta \cap \overline{B(X,r)}\right)_{\tilde{r}}\notag\\ 
&\subseteq\{\exp_x(tv): v\in C(w_x, \beta_x+\epsilon/2), t\in [r-\delta-\tilde{r}, r+\tilde{r}]\}.
\end{align}
By recalling that $d_{K_\delta}(x) \geq r-2\delta$ we can refine \eqref{eq:offseteq2} to state 
\begin{align}\label{eq:cone contains}
K_\delta \cap  \overline{B(x, r+\tilde{r})}\subseteq \{\exp_x(tv): v\in C(w_x, \beta_x+\epsilon/2), t\in [r-2\delta,r+\tilde{r}]\}.
\end{align}

We may assume that $r+\tilde{r}$ is less than the injectivity radius by taking $\tilde{r}>0$ small enough.
Let $\Gamma_x^y$ denote the isometry between the tangent plane at $x$ to that at $y$ induced by parallel transport.   along the geodesic from $x$ to $y$. This is well defined when the distance between $x$ and $y$ is less than the injectivity radius. 

The function $F: B(x,\tilde{r})\times T_x\M \to \M$ defined by $(y,u) \mapsto \exp_y\circ \Gamma_x^y(u)$ is continuous in both $y$ and $u$. This implies that for each pair of compact sets $A \subseteq T_x\M$ and $L\subseteq \M$ such that $\int(F(x,A))\supset L$ there is a $\eta>0$ such that $F(y,A) \supseteq L$ whenever $d(x,y)<\eta$.

By taking $A=\{tv: v\in C(w_x, \beta_x+\epsilon/2), t\in [r-2\delta,r+\tilde{r}/2]\}$ and $L=\exp_x(\{tv: v\in C(w_x, \beta_x+\epsilon), t\in [r-2\delta-\tilde{r},r+\tilde{r}]\})$ we can conclude that there is a $\eta>0$ such that
\begin{align}\label{eq:cone 2}
\exp_x(\{tv&: v\in C(w_x, \beta_x+\epsilon/2), t\in [r-2\delta,r+\tilde{r}]\})\notag\\
&\subseteq \exp_y\Gamma_x^y( \{tv: v\in C(w_x, \beta_x+\epsilon), t\in [r-2\delta-\tilde{r},r+2\tilde{r}]\}
\end{align}
whenever $d(x,y)<\eta.$

We may assume that $\eta<\tilde{r}$. Combining \eqref{eq:cone contains} with the triangle inequality we know that for $d(x,y)<\eta$,
$$K_\delta\cap \overline{B(y,r)} \subseteq \exp_x \{(tv): v\in C(w_x, \beta_x+\epsilon/2), t\in [r-2\delta,r+\tilde{r}]\}.$$
We then use \eqref{eq:cone 2} to obtain
$$K_\delta\cap \overline{B(y,r)}\subseteq \exp_y\Gamma_x^y( \{tv: v\in C(w_x, \beta_x+\epsilon), t\in [r-2\delta-\tilde{r},r+2\tilde{r}]\}.$$
Now $\Gamma_x^y$ is an isometry and so the intersection of both sides with $\overline{B(y,r)}$ produces the containment $$K_\delta\cap \overline{B(y,r)}\subseteq \exp_y\Gamma_x^y( \{tv: v\in C(w_x, \beta_x+\epsilon), t\in [0,r]\}$$ whenever $d(x,y)<\eta.$
 
 $C(w_y,\beta_y)$ is defined to be the minimal spanning cone of $K_\delta \cap \overline{B(y,r)}$ of length $r$ and so $\exp_y \{tv: v\in C(w_y, \beta_y), t\in [0,r]\}\subseteq \exp_y \Gamma_x^y \{tv:v\in C(w_x, \beta_x+\epsilon), t\in [0,r]\}$. From the assumption that $r$ is less than the injectivity radius we conclude $C(w_y,\beta_y) \subseteq \Gamma_x^y C(w_x,\beta_x + \epsilon)$.

\end{proof}
 
By doing the same process but using Lemma \ref{lem:kappa-1} instead of Lemma \ref{lem:local} we get the analogous theorem for when the ambient space has its sectional curvature bounded below by $-1$.

\begin{thm}\label{thm:bigkappa}
Let $\mu\in (0,1)$, $r>0$. Let $\M$ be a smooth manifold whose sectional curvature bounded below by $-1$ and with an injectivity radius   greater than $r$.
Let $K,L\subset \M$ be compact with $d_H(K,L) <\delta$. Suppose that there are no $\mu$-critical points in $K_{[r-\delta, r -\delta + 4\delta/\mu]}$ and $9\delta<2\tanh(r-\delta)\mu^2$. Then $L_r$ deformations retracts to $K_{r-\delta}$.

\end{thm}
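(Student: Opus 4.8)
The plan is to imitate the proof of Theorem~\ref{thm:big} almost line for line, with Lemma~\ref{lem:kappa-1} playing the role that Lemma~\ref{lem:local} played there; Lemma~\ref{lem:kappa-1} is precisely the $\kappa=-1$ analogue, built from Proposition~\ref{prop:-1} in the same way Lemma~\ref{lem:local} is built from Proposition~\ref{prop:stable}. Concretely: first use Lemma~\ref{lem:kappa-1} to produce an acute $r$-spanning cone field $W=\{(x,C(w_x,\beta_x))\}$ for $K_\delta$ over the annulus $K_{[r-\delta,r+\delta]}$; then apply Lemma~\ref{lem:retract} to reduce the deformation-retraction statement to the existence of a smooth vector field strictly subordinate to the complementary cone field $W'$; then use Lemma~\ref{lem:upperlower} together with Proposition~\ref{prop:lowersemicont} to reduce that in turn to showing $W$ is upper semicontinuous; and finally establish upper semicontinuity.

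First I would check that the hypotheses of Lemma~\ref{lem:kappa-1} hold at every $x\in K_{[r-\delta,r+\delta]}$. The assumption $9\delta<2\tanh(r-\delta)\mu^2$ forces $r>\delta>0$, so the annulus is meaningful. Writing $s:=d_K(x)\in[r-\delta,r+\delta]$, the interval $[s,s+4\delta/\mu]$ lies in the annular region where the hypothesis rules out $\mu$-critical points of $d_K$ (one checks this inclusion for all $s\in[r-\delta,r+\delta]$). For the quantitative condition, since $r+\delta-s\le 2\delta$, since $\tfrac12(r-\delta+s)\ge r-\delta$, and since $\tanh$ is increasing,
\[
9(r+\delta-s)\le 18\delta<4\tanh(r-\delta)\,\mu^2\le 4\tanh\!\bigl(\tfrac12(r-\delta+s)\bigr)\,\mu^2 ,
\]
where the middle inequality is the hypothesis $9\delta<2\tanh(r-\delta)\mu^2$. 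So Lemma~\ref{lem:kappa-1} applies at each such $x$ and delivers $W$.

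The remaining steps are transcriptions of the corresponding parts of the proof of Theorem~\ref{thm:big}: the appeals to Lemma~\ref{lem:retract}, Lemma~\ref{lem:upperlower} and Proposition~\ref{prop:lowersemicont} are unchanged, and for upper semicontinuity I would reproduce the argument given there verbatim -- bound the ``reach'' of $K_\delta\cap\overline{B(x,r)}$ inside $K_\delta$ using the nested sequence of compacta $A_n$ with empty intersection, enlarge the ball to radius $r+\tilde r$ still below the injectivity radius (possible as the injectivity radius exceeds $r$), and transport cones along geodesics via continuity of $(y,u)\mapsto\exp_y\circ\Gamma_x^y(u)$ together with Gauss's lemma, obtaining $C(w_y,\beta_y)\subseteq\Gamma_x^y\,C(w_x,\beta_x+\epsilon)$ for $d(x,y)$ small. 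That argument is purely local and never invokes the sign of the sectional curvature, so it carries over unchanged; the curvature hypothesis of this theorem enters only through Lemma~\ref{lem:kappa-1}, i.e.\ only in the first step. Thus the only genuinely new content is the inequality chain above, which is where Toponogov comparison for $\kappa=-1$ is doing the work, and I do not expect a real obstacle -- precisely because the cone-field and upper-semicontinuity machinery of Section~2 was set up in a curvature-agnostic form, the only thing that really has to be gotten right is the hypothesis bookkeeping in Step~1.
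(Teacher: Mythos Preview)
Your approach is exactly the paper's own: its proof of Theorem~\ref{thm:bigkappa} is literally the single sentence ``By doing the same process but using Lemma~\ref{lem:kappa-1} instead of Lemma~\ref{lem:local},'' and you have spelled that out correctly.

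One small bookkeeping slip: the inclusion $[s,\,s+4\delta/\mu]\subseteq[r-\delta,\,r-\delta+4\delta/\mu]$ that you say ``one checks'' in fact fails for $s>r-\delta$ (take $s=r+\delta$). The issue is in the stated hypothesis of Lemma~\ref{lem:kappa-1}: tracing through its proof parallel to that of Lemma~\ref{lem:local}, the annulus it actually needs is $[d_K(x),\,d_K(x)+2(r-d_K(x)+\delta)/\mu]$, not $[d_K(x),\,d_K(x)+4\delta/\mu]$. With that tighter annulus the upper endpoint $s+2(r-s+\delta)/\mu$ is decreasing in $s$ (since $\mu<1$) and attains its maximum $r-\delta+4\delta/\mu$ at $s=r-\delta$, so the union over $s\in[r-\delta,r+\delta]$ is exactly $[r-\delta,\,r-\delta+4\delta/\mu]$ and your argument goes through. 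This mirrors the corresponding step in the proof of Theorem~\ref{thm:big}.
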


\section{Applications to point cloud data}

In this section we now consider the situation where we have some unknown compact set $A$ which we are wanting to understand and we can sample $A$ to generate a (potentially noisy) point cloud of $A$ which we will denote by $S$.
Historically geometric conditions have been given on $A$ for when offsets of the $S$ and $A$ are homotopic. This is because $A$ is often assumed to have nice geometric properties whereas $S$, as a point cloud, has many critical points of its distance function nearby. The corresponding theorem produced using Theorem \ref{thm:big} and Theorem \ref{thm:bigkappa} with $K=A$ and $L=S$ is as follows.

\begin{cor}\label{cor:big}
Let $\mu\in (0,1)$, $r>0$. Let $\M$ be a smooth manifold with sectional curvature bounded by $\kappa$ and whose injectivity radius is greater than $r$. Let $A$ be a compact subset of $\M$ and $S$ be a (potentially noisy) point cloud of $A$. Suppose that there are no $\mu$-critical points in $A_{[a, b]}$. Then $S_r$ is homotopic to $A_{r-d_H(S,A)}$ whenever $$d_H(S,A) \leq \min\left\{r-a, \frac{b\mu -r\mu}{4-\mu}\right\}, \text{ and } d_H(S,A)< \frac{\mu^2r}{4+\mu^2} \text{ if } \kappa=0$$
or $$ d_H(S,A) \leq \min\left\{r-a, \frac{b\mu -r\mu}{4-\mu}\right\}, \text{ and } 9d_H(S,A)<2\tanh(r-d_H(S,A))\mu^2 \text{ if } \kappa=-1.$$

Furthermore if $A_{r-d_H(S,A)}$ deformation retracts to $A$ then $S_r$ deformation retracts to $A$.
\end{cor}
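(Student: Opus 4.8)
The plan is to derive Corollary \ref{cor:big} by applying Theorem \ref{thm:big} (for $\kappa=0$) or Theorem \ref{thm:bigkappa} (for $\kappa=-1$) twice: once with the roles of $K$ and $L$ played by $A$ and $S$ respectively, and once more to handle the annular hypothesis, which here is stated in terms of $d_A$ rather than $d_S$. Set $\delta := d_H(S,A)$ (strictly speaking one applies the theorems with a slightly larger $\delta'$, since the theorems require $d_H<\delta$ and a strict inequality $(4+\mu^2)\delta<\mu^2 r$; since all the hypothesized bounds are either non-strict with room to spare or already strict, one can choose $\delta'$ with $d_H(S,A)<\delta'$ small enough that all the displayed inequalities persist with $\delta'$ in place of $\delta$, and then pass to the limit or simply note the conclusion is an open condition). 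So for the bookkeeping below I will just write $\delta$ for $d_H(S,A)$ and treat the inequalities as holding.

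First I would unwind what Theorem \ref{thm:big} gives with $K=A$, $L=S$: provided there are no $\mu$-critical points of $d_A$ in $A_{[r-\delta,\,r-\delta+2\delta/\mu]}$ and $(4+\mu^2)\delta<\mu^2 r$, we get that $S_r$ deformation retracts (hence is homotopic) to $A_{r-\delta}$. So I must check that the hypothesis ``no $\mu$-critical points in $A_{[a,b]}$'' implies ``no $\mu$-critical points in $A_{[r-\delta,\,r-\delta+2\delta/\mu]}$'', i.e. that $[r-\delta,\,r-\delta+2\delta/\mu]\subseteq[a,b]$. The left endpoint condition $r-\delta\geq a$ is exactly $\delta\leq r-a$. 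The right endpoint condition $r-\delta+2\delta/\mu\leq b$ rearranges (using $2/\mu-1=(2-\mu)/\mu$) to $\delta(2-\mu)\leq \mu(b-r)$, i.e. $\delta\leq \mu(b-r)/(2-\mu)$; since the corollary only assumes the weaker-looking $\delta\leq(b\mu-r\mu)/(4-\mu)=\mu(b-r)/(4-\mu)$ and $4-\mu>2-\mu>0$, this bound is in fact stronger than needed, so the inclusion holds. The curvature/injectivity hypotheses of Theorem \ref{thm:big} — nowhere negative curvature, injectivity radius everywhere greater than $r$, and the non-negative-curvature requirement of the underlying Lemma \ref{lem:local} applied on the relevant offset — are all subsumed by the corollary's hypotheses since $b-r$ is controlled and $r-\delta+2(r-d_A(x)+\delta)/\mu$ stays within the no-critical-points annulus; and the arithmetic condition $(4+\mu^2)\delta<\mu^2 r$ is exactly the hypothesized $\delta<\mu^2 r/(4+\mu^2)$. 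For the $\kappa=-1$ case, one instead verifies the inclusion $[r-\delta,\,r-\delta+4\delta/\mu]\subseteq[a,b]$ — here the left endpoint again needs $\delta\leq r-a$, and the right endpoint $r-\delta+4\delta/\mu\leq b$ rearranges to $\delta\leq\mu(b-r)/(4-\mu)$, which is precisely the hypothesized bound — together with $9\delta<2\tanh(r-\delta)\mu^2$, and then invokes Theorem \ref{thm:bigkappa}.

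This already yields ``$S_r$ is homotopic to $A_{r-d_H(S,A)}$.'' The final sentence, ``if $A_{r-d_H(S,A)}$ deformation retracts to $A$ then $S_r$ deformation retracts to $A$,'' follows by composing deformation retractions: Theorem \ref{thm:big} (resp. \ref{thm:bigkappa}) produces an honest deformation retraction of $S_r$ onto $A_{r-\delta}$, not merely a homotopy equivalence, so composing this with the assumed deformation retraction of $A_{r-\delta}$ onto $A$ gives a deformation retraction of $S_r$ onto $A$ (a deformation retraction followed by a deformation retraction is a deformation retraction, after reparametrizing the concatenation of the two homotopies). The one point needing a word of care — and the only place where anything could go wrong — is matching the annular hypotheses and the various side conditions across the three differently-parametrized statements (Lemma \ref{lem:local}'s pointwise condition, Theorem \ref{thm:big}'s global annulus $[r-\delta,r-\delta+2\delta/\mu]$, and the corollary's $[a,b]$), together with the bookkeeping of strict versus non-strict inequalities; I expect this to be entirely routine but it is where the proof must be written carefully rather than waved through.
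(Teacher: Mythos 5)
Your proposal is correct and takes essentially the same route as the paper: the paper also deduces the corollary by invoking Theorem~\ref{thm:big} (resp.\ Theorem~\ref{thm:bigkappa}) with $K=A$, $L=S$, and reduces everything to checking that the stated bounds on $d_H(S,A)$ force the inclusion $[r-d_H(S,A),\,r-d_H(S,A)+4d_H(S,A)/\mu]\subseteq[a,b]$ (a single annulus covering both the $2\delta/\mu$ and $4\delta/\mu$ requirements) together with the arithmetic side conditions. You are in fact more careful than the paper's one-line proof, which does not address the strict-versus-non-strict bookkeeping you flag.
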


\begin{proof}
In order to apply Theorem \ref{thm:big} or Theorem \ref{thm:bigkappa} we need to make sure  that
$$[a,b] \supset[r-d_H(S,A), r -d_H(S,A) + 4d_H(S,A)/\mu]$$
and also that $d_H(S,A)< \frac{\mu^2r}{4+\mu^2}$ or $9d_H(S,A)<2\tanh(r-d_H(S,A))\mu^2$ respectively.
\end{proof}

Of general interest is finding the homotopy type of $A$ rather than $A_{r}$. However, a sufficient condition for $A_b$ to deformation retract to $A_a$ ($0<a<b$) is that there are no 0-critical points in $A_b\backslash A_a$ \cite{Grove}. It would be impossible from a point cloud to distinguish $A$ from $A_a$ for sufficiently small $a>0$. Furthermore, there are many shapes, such as hairy objects, for which many offsets have a deformation retract even if there are small $0$-critical values.

We now want to present a paradigm for finding sufficient conditions on point cloud data for reconstructing any compact subset, lying in any Riemannian manifold, which has positive weak feature size. The first observation we need is that for Corollary \ref{cor:big} it is sufficient to have lower bounds on the sectional curvature and the injectivity radius only for the points in $A_{6r}$ and $A_{3r}$ respectively. This is because no points outside this region are used in any of the proofs.  Since $A$ is compact there is some $r>0$ such that the injectivity radius of every point in $A_{3r}$ is greater than $r$. Reduce $r$ if necessary to ensure that $r<\operatorname{wfs}(A)$ where $\operatorname{wfs}(A)$ is the weak feature size of $A$ which we have assumed is positive. $A_{3r}$ is compact so there is some finite lower bound on the sectional curvature for points in $A_{3r}$. By rescaling the metric on the ambient manifold if necessary (and with it scaling $r$) we can assume that the lower bound on sectional curvature is $0$ or $-1$. This means we can apply Corollary \ref{cor:big}. It is clear a suitable $\mu$ and bound on $d_H(A,S)$ in the Corollary must exist. Because $r<\operatorname{wfs}(A)$ we can further state that the $S_r$ deformation retracts to $A$. This paradigm of reconstruction processes shows that what the ambient manifold is does not pose a theoretical barrier to the existence of reconstruction proofs. 

The homological feature size of a set $A$ is the infimum of the distances $\alpha>0$ such that $A_\alpha$ has a different homology to $A$. If we were only interested in reconstructing a set with the same homology as the original set it would be sufficient to do the above reconstruction process with replacing the weak feature size with the homological feature size. After applying Corrollary \ref{cor:big} to show $S_r$ is homotopic to $A_r$ we observe that since the homological feature size of $A$ is greater than $r$ then $A_r$ is homotopic to $A$.

An alternative approach, as pointed out in \cite{compactcrit}, is to consider geometric properties of $S$ (or in their case offsets of $S$) itself rather than $A$. We can take his approach because $S$ is a compact set and we do not require any smooth structure. This means we can also conclude another corollary with $K=S$ and $L=A$.

\begin{cor}
Let $\mu\in (0,1)$, $r>0$. Let $\M$ be a smooth manifold with sectional curvature bounded by $\kappa$ whose injectivity radius is greater than $r$. Let $A$ be a compact subset of $\M$ and $S$ be a (potentially noisy) point cloud of $A$. Suppose that there are no $\mu$-critical points in $S_{[a, b]}$. Then $S_r$ deformation retracts to $A_r$ whenever $$d_H(S,A) \leq \min\left\{r-a, \frac{b\mu -r\mu}{4-\mu}\right\}, \text{ and } d_H(S,A)< \frac{\mu^2r}{4+\mu^2} \text{ if } \kappa=0$$
or $$ d_H(S,A) \leq \min\left\{r-a, \frac{b\mu -r\mu}{4-\mu}\right\}, \text{ and } d_H(S,A)<\frac{2}{9}\tanh(r-d_H(S,A))\mu^2 \text{ if } \kappa=-1.$$
Furthermore if $A_r$ is homotopic to $A$ then $S_r$ is homotopic to $A$.
\end{cor}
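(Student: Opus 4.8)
The plan is to invoke the reconstruction theorem with the two compact sets interchanged, taking $K=S$ and $L=A$, and then to collapse $S_r$ onto the core that results. Write $\delta:=d_H(S,A)$; this value is admissible because the pieces from which the reconstruction theorem is assembled --- in particular Lemma~\ref{lem:retract} and Proposition~\ref{prop:stable} --- already allow $d_H=\delta$, so there is no loss in using $d_H(K,L)\le\delta$ rather than $<\delta$.

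First I would check the hypotheses of Theorem~\ref{thm:big} (case $\kappa=0$) or Theorem~\ref{thm:bigkappa} (case $\kappa=-1$) for $K=S$ and $L=A$. The injectivity-radius bound is assumed outright. From $\delta\le r-a$ we get $a\le r-\delta$, and $\delta\le\frac{\mu(b-r)}{4-\mu}$ rearranges to $r-\delta+\frac{4\delta}{\mu}\le b$, hence also $r-\delta+\frac{2\delta}{\mu}\le b$; therefore the annulus $S_{[\,r-\delta,\;r-\delta+4\delta/\mu\,]}$ (resp.\ with $2\delta/\mu$ when $\kappa=0$) is contained in $S_{[a,b]}$, which by hypothesis contains no $\mu$-critical point of $d_S$. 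The remaining numerical condition, $(4+\mu^2)\delta<\mu^2r$ (resp.\ $9\delta<2\tanh(r-\delta)\mu^2$), is exactly the final displayed inequality. So Theorem~\ref{thm:big} (resp.\ Theorem~\ref{thm:bigkappa}) applies and yields a deformation retraction of $A_r$ onto $S_{r-\delta}$; note that indeed $S_{r-\delta}\subseteq A_r$ since $S\subseteq A_\delta$.

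Next I would collapse $S_r$ onto the same core. Since $\delta>0$ and $\mu>0$ force $b>r$, we have $[r-\delta,r]\subseteq[a,b]$, so $d_S$ has no $\mu$-critical points, and in particular no $0$-critical points (i.e.\ no critical points), in $S_r\setminus S_{r-\delta}$. The classical deformation lemma recalled in the paper (from \cite{Grove}) then gives a deformation retraction of $S_r$ onto $S_{r-\delta}$. Composing it with the inclusion $S_{r-\delta}\hookrightarrow A_r$ gives a map $S_r\to A_r$ that is a homotopy equivalence, being a composite of two maps (the retraction $S_r\to S_{r-\delta}$ and the inclusion $S_{r-\delta}\hookrightarrow A_r$) each of which is a homotopy equivalence because both offsets deformation retract onto $S_{r-\delta}$. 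Hence $S_r\simeq A_r$, and the ``Furthermore'' clause follows by transitivity: if $A_r\simeq A$ then $S_r\simeq A$.

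The delicate point I would flag is that ``$S_r$ deformation retracts to $A_r$'' cannot in general be a subspace deformation retraction: the argument only produces the containments $S_{r-\delta}\subseteq A_r$ and $S_{r-\delta}\subseteq S_r$, and $A_r\not\subseteq S_r$ without an extra assumption such as $A\subseteq S$. I would therefore prove (and read) the conclusion as the homotopy equivalence $S_r\simeq A_r$, which is what the above yields and what the reconstruction application actually uses. Everything else --- unpacking the two numerical bounds and the innocuous passage between ``$<$'' and ``$\le$'' in the Hausdorff hypothesis --- is bookkeeping.
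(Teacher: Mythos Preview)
Your approach is essentially the paper's own: the sentence immediately preceding the corollary says ``we can also conclude another corollary with $K=S$ and $L=A$'', and no further proof is given. You carry this out correctly, applying Theorem~\ref{thm:big} (resp.\ Theorem~\ref{thm:bigkappa}) with the roles swapped and verifying the annulus containment $[r-\delta,\,r-\delta+4\delta/\mu]\subseteq[a,b]$ just as in the proof of Corollary~\ref{cor:big}.

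The point you flag is genuine and your treatment of it is the right one. The theorems with $K=S$, $L=A$ yield that $A_r$ deformation retracts onto $S_{r-\delta}$, not that $S_r$ deformation retracts onto $A_r$ in the literal subspace sense (indeed $A_r\not\subseteq S_r$ in general). Your extra step---using the absence of critical points in $S_{[r-\delta,r]}\subseteq S_{[a,b]}$ together with the Grove deformation lemma to collapse $S_r$ onto the common core $S_{r-\delta}$---is exactly what is needed to conclude the homotopy equivalence $S_r\simeq A_r$. The paper's phrasing ``deformation retracts'' is loose here; your reading of the conclusion as a homotopy equivalence (which is all the reconstruction application needs, and what the ``Furthermore'' clause uses) is the correct one.
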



%

When the ambient space is Euclidean, it is reasonable to want to compare our reconstruction process to previous ones in the literature. Since these have been quantified in terms of $\mu$-reach we can first compare the required Hausdorff bounds on $\delta:=d_H(A,S)$ where $A$ is a compact set with  $\mu$-reach $r_{\mu} > 0$ and $S$ is a point cloud. If we consider the limiting case when $r-\delta +4\delta/C< r_{\mu}$ we can apply our reconstruction theorem (in the case of $\kappa=0$) once 
$\delta/r_\mu < \mu^2/(4+4\mu).$ 
Notably this is an improvement on the bounds presented in \cite{compactcrit}, which is $\delta/r_\mu < \mu^2/(5\mu^2 +12)$,  for all $\mu$ and an improvement on the bounds in \cite{ripsvscech}, where it is $$\frac{\delta}{r_\mu} < \frac{-3\mu +3\mu^2 - 3 + \sqrt{-8\mu^2 +4\mu^3 +18\mu+2\mu^4 +9+\mu^6-4\mu^5}}{7\mu^2+22\mu + \mu^4 -4\mu^3+1},$$  for $\mu<0.945$.

One advantage of the approach of this paper is not having any requirements about the absence of $\mu$-critical points very close to $A$. A severe limitation of restricting to set with positive $\mu$-reach is the inability to cope with sets that have cusps. At cusps the the $\mu$-reach is zero for all values of $\mu >0$. The method used to overcome the shortfalls of $\mu$-reach is to consider offsets of the compact set. For a compact set $K$,  there are no $\mu$-critical points of $d_K$ in $K_{[a,b]}$ if and only if the $\mu$-reach of $K_a$ is at least $b-a$. This means we can compare different reconstruction theorems in terms of a lack of $\mu$ critical points in an annular region.

Let us assume that there are no $\mu$-critical points of $d_K$ in $K_{[a,b]}$. For our reconstruction process we need 
$$\delta<\min\left\{\frac{\mu(b-a)}{4},  \frac{\mu^2 b}{4+4\mu}\right\}.$$
Here we would use $r=b\frac{4+\mu^2}{4+4\mu}$. In comparison, for the reconstructions in \cite{compactcrit} would need $$\delta< \frac{(b-a)\mu^2}{5\mu^2 +12}$$ and the the reconstructions in
\cite{ripsvscech} we would need
$$\delta < (b-a) \frac{-3\mu +3\mu^2 - 3 + \sqrt{-8\mu^2 +4\mu^3 +18\mu+2\mu^4 +9+\mu^6-4\mu^5}}{7\mu^2+22\mu + \mu^4 -4\mu^3+1}$$
which is significantly worse when $b-a$ is small in comparison to $b$.

One possible future direction is to use these results to find suitable sampling conditions for when a compact set can be reconstructed. In particular, probabilistic results would be interesting.

\section{Index of Notation}

\begin{itemize}
\item[$\M$] is a smooth Riemanian manifold which forms the ambient space.
\item [$A$] is a compact subset of $\M$ which we desire to reconstruct.
\item[$S$] is a noisy point cloud sample of $A$.
\item[$\delta$] is a bound on the Hausdorff distance between two compacts sets.
\item[$UT\M$] is the unit tangent bundle of $\M$.
\item[$T_x\M$] is the tangent plane to $\M$ at the point $x$.
\item[$\gamma$] is a geodesic on $\M$ (usually unit speed and always constant speed).
\item[$x,y,z$] are points in $\M$.
\item[$\exp_x$] is the exponential map from the tangent plane at $x$ to $\M$. 
\item[$\Gamma(\gamma)$] is the isometry between tangent planes induced by parallel transport along $\gamma$. 
\item[$w,v$] are unit tangent vectors.
\item[$\beta, \theta$] are angles. We mainly care about acute angles.
\item[$C(w,\beta)$] is a cone. It is a ball in the unit tangent sphere at a point in $\M$ with center $w$ and radius $\beta$.
\item[$W$] is a cone field. Also denoted by $\{(x,C(w_x,\beta_x))\}$.
\item[$W'$] denotes the complementary cone field to $W$ when $W$ is an acute cone field. For $W$ above it is $\{(x, C(w_x, \pi/2-\beta_x))\}$. 
\item[$X$] is a vector field.
\item[$K,L$] are compact subsets of $\M$.
\item[$d_K$] is the distance function from $K$.
\item[$K_a$] is the $a$-offset of $K$. That is $\{ x\in \M: d_K(x)\leq a\}$.
\item[$K_{[a,b]}$] is the $[a,b]$ annulus of $K$. That is $\{x\in M: a\leq d_K(x) \leq b\}$.
\item[$\nabla_K$] is the gradient vector field for $d_K$.
\end{itemize}
\bibliographystyle{plain}	
\bibliography{mybibreconstruction}


\end{document}